\tikzstyle{black dot}=[fill=black, draw=black, shape=circle]
\tikzstyle{white dot}=[fill=none, draw=black, shape=circle]
\tikzstyle{black edge}=[-]
\theoremstyle{plain}
\newtheorem{theorem}{Theorem}[section]
\newtheorem{lemma}[theorem]{Lemma}
\newtheorem{claim}[theorem]{Claim}
\newtheorem{proposition}[theorem]{Proposition}
\theoremstyle{definition}
\newtheorem{definition}[theorem]{Definition}
\newtheorem{example}[theorem]{Example}
\theoremstyle{remark}
\DeclareMathOperator{\nbhd}{N}
\newcommand{\tens}[1]{
  \mathbin{\mathop{\otimes}\limits_{#1}}
}
\begin{document}

\title{Improved Graph Formalism for Quantum Circuit Simulation}
\author{Alexander Tianlin Hu}
\affiliation{The~Harker~School}
\author{Andrey Boris Khesin}
\affiliation{MIT}
\date{\today}
\begin{abstract}
Improving the simulation of quantum circuits on classical computers is important for understanding quantum advantage and increasing development speed. In this paper, we explore a new way to express stabilizer states and further improve the speed of simulating stabilizer circuits with a current existing approach. First, we discover a unique and elegant canonical form for stabilizer states based on graph states to better represent stabilizer states and show how to efficiently simplify stabilizer states to canonical form. Second, we develop an improved algorithm for graph state stabilizer simulation and establish limitations on reducing the quadratic runtime of applying controlled-Pauli $Z$ gates. We do so by creating a simpler formula for combining two Pauli-related stabilizer states into one. Third, to better understand the linear dependence of stabilizer states, we characterize all linearly dependent triplets, revealing symmetries in the inner products. Using our novel controlled-Pauli $Z$ algorithm, we improve runtime for inner product computation from $O(n^3)$ to $O(nd^2)$ where $d$ is the maximum degree of the graph.
\end{abstract}
\maketitle
\section{Introduction}
In quantum computation and quantum information, the \textit{stabilizer formalism} is a way of working with a particular set of quantum states. The state vectors of these states are stabilized, namely belong to the eigenspace of Hermitian Pauli operators \cite{mikeike}. 
Working with these operators is much easier than working with the state vectors themselves, and linear combinations of stabilizer states are often used to represent quantum states compactly. The stabilizer formalism is first introduced to describe the codes and circuits used in quantum error correction and fault-tolerant quantum computation \cite{gottesman, gottesman phd}. 
Later, it is applied to simulate quantum circuits \cite{bravyi smolin}, where a universal class of quantum circuits is simulated by expressing magic states as linear combinations of stabilizer states \cite{bss,kerzner,bravyi smolin,chp,stim,hq,shir}.

\textit{Graph states} are a special type of stabilizer state that can be completely defined in terms of a graph. Graph states, like stabilizer states, have many applications, such as a resource in measurement based quantum computing \cite{cluster}, constructing codes in quantum error correction \cite{stabgraph3,mh2}, and representing stabilizer states in the classical simulation of stabilizer circuits \cite{graphsim}.

The graph and its properties give a nice structure to study graph states. For example, graph-theoretic properties can be leveraged to characterize and compute multi-party entanglement in graph states \cite{multiparty}. The orbits of graph states under local Clifford operations can be generated by applying local complementation to the graph \cite{van den nest}.

Graph states can be extended to represent all stabilizer states by applying local 
Clifford operators to each of the qubits \cite{van den nest}. 
In this paper, we study stabilizer states through the lens of graph states augmented with local Clifford operations.
This approach has advantages because representations of stabilizer states with graph states provide a way of visualizing stabilizer states.
The goal of our research is to connect graph-theoretic properties with properties of the stabilizer state and improve the classical simulation of quantum circuits. 
Our paper consists of Section \ref{section2}, containing key definitions, followed by Sections \ref{section3}, \ref{section5}, \ref{section4}, each containing one of our contributions.

In Section \ref{section3}, we discover and propose a new \textit{canonical form} for stabilizer states based on graph states, which improves upon previous work by being unique.
In \cite{stabgraph1,stabgraph2}, it is shown that any stabilizer state expressed with graph states can be further simplified to a simple and elegant \textit{reduced form}.
However, two stabilizer states in reduced form may appear different yet actually refer to the same state. 
To test whether two stabilizer states in reduced form refer to the same state, an equivalence test is developed in \cite{stabgraph1}.
Another canonical form for stabilizer states is developed in \cite{garcia}. It is based on the binary matrix representation. It can be used to compute inner products efficiently and does so by being converted to a canonical circuit with five blocks of Clifford gates.
We develop our canonical form by further simplifying the reduced form. Testing for equivalence is now a trivial comparison. The graph provides a nice structure to work with. The canonical form can directly be converted to a four-layer canonical circuit, and our line of work can be further studied to simplify diagrammatic representations of Clifford operators in the ZX calculus \cite{backen}.

In Section \ref{section5}, we develop a simpler and faster algorithm for graph state simulation, the method of using graph states to simulate stabilizer circuit computations.
Graph state simulation is suited for simulating circuits with fewer interactions between qubits and nearly independent pieces.
Currently, the first algorithm for graph state simulation was GraphSim, introduced and implemented in \cite{graphsim}. 
In \cite{stim}, it is shown that GraphSim is the fastest method for simulating multi-level S state distillation circuits, compared to the CHP simulator \cite{chp}, Qiskit simulator \cite{qiskit}, Cirq simulator \cite{circ}, and Stim simulator \cite{stim}.
GraphSim applies controlled-Pauli $Z$ gates to graph states augmented with local Clifford operations by applying local complementations to graphs iteratively with many cases depending on the local Clifford operators.
GraphSim is a relatively complex algorithm and does not provide insight about the updated state. 
We develop a simpler and faster alternative to GraphSim based on novel formulas enabling quicker gate updates.

Recently, the work of \cite{bruh} improves the runtime of applying controlled-Phase $Z$ gates of GraphSim by reducing the number of costly quadratic local complementation operations and optimizing controlled-Phase $Z$ gate updates to take linear time rather than quadratic time in certain cases. 
Our independently-discovered algorithm has similar runtimes and may run faster in practice. This is because it is simpler, directly computing the updated state upon controlled-Phase $Z$ gate application using our novel formulas instead of using casework and repeated local complementation.

With our novel formulas, we explore ways to improve the theoretical runtime of applying controlled-Phase $Z$ gates in graph state simulation.
Improving the runtime would enable graph state simulation to become the fastest method for simulating any type of stabilizer circuit \cite{kerzner}.
Currently, the runtime is believed to be $O(d^2)$ where $d$ is the maximum degree of the vertices in the graph \cite{kerzner, bruh}. 
We demonstrate that if we apply gates one at a time, in certain scenarios, the $\Omega(d^2)$ runtime is unavoidable.
Our algorithm is the optimal algorithm for applying gates in graph state simulation. 
Our work suggests that further improvements to graph state simulation must be based on the specifications of the circuit to be simulated.

In Section \ref{section4}, we develop a simpler algorithm for the merging of two stabilizer states related by a Pauli operator, which can be applied to compute measurements of Pauli observables. In \cite{multiparty}, the measurements of single-qubit Pauli operators on graph states are computed. The work of \cite{stabgraph2} and \cite{khesin ren} explores the multiple-qubit case, but while the former relies on a complicated procedure, the latter method utilizes a complicated formula in one case. Our work simplifies the formula in \cite{khesin ren}. 
We apply our graph merging formula to discover our formulas for applying controlled-Phase $Z$ gates to stabilizer states.

In addition, we study the linear dependence of stabilizer states by fully examining the special case of the linear dependence of three stabilizer states. In \cite{garcia,garcia2}, the rich geometric structure of stabilizer states is studied to design more efficient techniques for representing and manipulating quantum states. 
We extend the work of \cite{garcia2} by finding two new cases of linearly dependent triplets with symmetric pairwise inner products.

To improve efficiency in detecting the linear dependence of three stabilizer states and computing stabilizer states that are in the span of two given stabilizer states, among other applications, 
we develop an efficient inner product computation algorithm for stabilizer states represented with graph states. 
For stabilizer states represented with matrices, inner products can be computed in $O(n^3)$ \cite{garcia}. 
For stabilizer states represented with graph states, inner products can currently be computed in $O(Ed^2+nd^2)$ \cite{bruh}, where $E$ is the number of edges in the graph and $d$ is the maximum degree. 
By applying controlled-Phase $Z$ gates in a particular way, our algorithm improves runtime to $O(nd^2)$.
\section{The Stabilizer and Graph Formalisms}
\label{section2}
Here we define important notations used throughout the paper. We start by defining the operators that we use frequently in this paper.
Let a \textit{Pauli operator} $P$ on $n$ qubits be of the form $i^k \bigotimes\limits_{i=1}^{n}P_i$ where $k\in \{0,1,2,3\}$ and $P_i\in\{I,X,Y,Z\}$ is a Pauli matrix. The Pauli matrices are defined as $I\equiv 
\begin{pmatrix}
1&0\\
0&1
\end{pmatrix}$, $X\equiv
\begin{pmatrix}
0&1\\
1&0
\end{pmatrix}$, $Y\equiv \begin{pmatrix}
0&-i\\
i&0
\end{pmatrix}$, and $Z\equiv \begin{pmatrix}
1&0\\
0&-1
\end{pmatrix}$. Let the set of all Pauli operators be $\mathcal{P}$, the \textit{Pauli group}.

For some gate $U$, we let $CU_{a,b}$ denote the \textit{controlled-U gate} with control qubit $a$ and target qubit $b$. For example, we let $CX_{a,b}$ denote the \textit{controlled-X} gate with control qubit $a$ and target $b$, and we define $CY_{a,b}$ and $CZ_{a,b}$ similarly. We place subscripts on single-qubit operators to turn them into $n$-qubit operators where that operator is applied to the particular qubit referred to by the subscript, and $n$ is contextual. For example, $Z_1$ would be the Pauli $Z$ gate on qubit $1$. If $a=b$, then we assume $U$ is diagonal and let $CU_{a,a}\equiv U_a$.

Let a \textit{Clifford operator} $C$ on $n$ qubits be a unitary operator on $2^n$ dimensional state space such that for all Pauli operators $P$ on $n$ qubits, $CPC^{\dagger}\in \mathcal{P}$. Let the set of all Clifford operators be $\mathcal{C}$, the \textit{Clifford group}, which is generated by the \textit{Hadamard gate} $H\equiv \frac{1}{\sqrt{2}}\begin{pmatrix}
1&1\\
1&-1
\end{pmatrix}$, the \textit{phase gate} $S\equiv \begin{pmatrix}
1&0\\
0&i
\end{pmatrix}$, and any controlled-Pauli gate \cite{gottesman}.
We call the Clifford operators $C$ acting on a single qubit \textit{local Clifford operators}, and these operators are generated by $H$ and $S$ up to global phase.

Let the $n$-qubit state $\ket{\psi}$ be a \emph{stabilizer state} if there exists a set of $n$ commuting independent Pauli operators, $\{g_1,g_2,\dots,g_n\}$, such that for all $i\in \{1,2,\dots,n\}$, $g_i^2=I$ and $g_i\ket{\psi}=\ket{\psi}$. We call the operators $g_i$ \textit{stabilizers}. A stabilizer state $\ket{\psi}$ is equivalently defined as a state resulting from the action of a Clifford operator $C$ on a computational basis state \cite{chp}.

For a \textit{graph} $G$, we let $E(G)$ refer to the set of edges of $G$ and $V=\{1,2,\dots n\}\equiv \left[n\right]$, where vertex $i$ and qubit $i$ are synonymous \cite{graph}. We assume our graphs are undirected and do not have loop edges or multiple edges between two qubits. Let $\nbhd(i)$ be the set of neighbors of $i$ in $G$ not including $i$, where $G$ is contextual. 
Let the local complementation of a graph $G$ at qubit $i$, $L_i(G)$, be $G$ except that for each pair of qubits in $\nbhd(i)$, the corresponding edge is in $L_i(G)$ if and only if it is not in $G$.

The \textit{graph state} of a graph $G$, $\ket{G}$, is the stabilizer state with stabilizers $g_i\equiv X_i\prod\limits_{j\in \nbhd(i)}Z_j$ for $1\le i\le n$ \cite{van den nest}. 
An equivalent way of defining graph states is \begin{align}
    \ket{G}\equiv \left(\prod\limits_{(i,j)\in E(G)}\hspace{-2ex}CZ_{i,j}\right)\ket{+}^{\tens{}n},
\end{align} where $\ket{+}\equiv \frac{1}{\sqrt{2}}(\ket{0}+\ket{1})$ \cite{van den nest}. When $\ket{G}$ is expressed as a state vector, the global phase is fixed by assuming the amplitude of $\ket{0}^{\tens{}n}$ is positive and real.

We define our terminology for stabilizer states represented as applications of local Clifford operators to graph states, which is enabled by a theorem proved in \cite{van den nest}.
\begin{definition}
An \emph{extended graph state} is a graph state with local Clifford operators applied to it, written as $C\ket{G}$ where $C$ is a tensor product of local Clifford operators.
\end{definition}
Let the \textit{support} of a quantum state $\ket{\psi}$ be the number of non-zero amplitudes it has when written as a state vector, and let the \textit{support set} be the set of vectors corresponding to the computational basis states with non-zero amplitudes in $\ket{\psi}$.

These definitions enable us to examine stabilizer states from the perspective of graph states.

\section{Canonical forms for stabilizer states}
\label{section3}
\subsection{Canonical Generator Matrix}
The binary representation of the stabilizer formalism \cite{mikeike} associates a binary vector with each Pauli operator generator. 
The generators of an $n$-qubit stabilizer state are stored in an $n\times 2n$ generator matrix. 
The rows of the generator matrix are linearly independent, and a shifted inner product can be defined so that it is $0$ for all pairs of distinct rows of the generator matrix.
Swapping rows corresponds to swapping generators, adding a row to another corresponds to multiplying generators, and switching columns corresponds to swapping qubits. 
These operations can transform a generator matrix into a \textit{canonical form},
\begin{align}
\left(
    \begin{array}{cc|cc}
    I & A & B & 0\\
    0 & 0 & A^T & I
    \end{array}\right),
\end{align}where $B$ is symmetric. 
However, this canonical form is not unique because of the freedom in choosing how to swap qubits. 
Furthermore, this canonical form can be converted into the \textit{reduced form} for extended graph states \cite{stabgraph1}.
\begin{definition}
Let an extended graph state $C\ket{G}$ be in \emph{reduced form} if there exist $n$-tuples $c\equiv (c_1,\dots c_n)$ and $z\equiv (z_1,\dots z_n)$ with $c_i\in \{ I,S,H\}$ and $z_i\in \{I,Z\}$ such that $C=\bigotimes\limits_{i=1}^nc_iz_i$, and for all $(i,j)\in E(G)$, either $c_i\neq H$ or $c_j\neq H$.
\end{definition}
The reduced form provides an elegant graphical representation of stabilizer states, but multiple extended graph states in reduced form can refer to the same quantum state. 
\subsection{A unique canonical form}
Our canonical form is an extension of the reduced form.
\begin{definition}
Let an extended graph state in reduced form be in \emph{canonical form} if for all $(i,j)\in E(G)$ such that $c_i=H$, we have $j>i$.
\end{definition}
\begin{figure}
    \centering
    \begin{tikzpicture}
	    \begin{pgfonlayer}{nodelayer}
    		\node [label={[label distance=-0.1cm]30: $SZ$ },style=white dot] (0) at (0, 0) {$3$};
    		\node [style=white dot] (1) at (-2, 1) {$4$};
    		\node [label={[label distance=-0.1cm]30: $Z$ },style=white dot] (2) at (2, 2.75) {$5$};
    		\node [label={[label distance=-0.1cm]30: $S$ },style=white dot] (3) at (-0.1, 3.5) {$6$};
    		\node [label={[label distance=-0.1cm]15: $H$ },style=white dot] (4) at (-1, -1.25) {$1$};
    		\node [label={[label distance=-0.1cm]30: $H$ },style=white dot] (5) at (2.75, -0.25) {$2$};
    		\node [style=none] (6) at (-1.25, 3.75) {};
    		\node [label={[label distance=-0.1cm]30: $HZ$ },style=white dot] (7) at (-2.0, 3.9) {$7$};
    	\end{pgfonlayer}
    	\begin{pgfonlayer}{edgelayer}
    		\draw (1) to (0);
    		\draw (0) to (3);
    		\draw (5) to (0);
    		\draw (2) to (5);
    		\draw (4) to (0);
    		\draw (4) to (3);
	    \end{pgfonlayer}
    \end{tikzpicture}
    \caption{An illustration of the stabilizer state in canonical form, $\ket{\psi}=H_1H_2S_3Z_3Z_5S_6H_7Z_7\ket{G}$, where $E(G)=\{(1,3),(1,6),(2,3),(2,5),(3,4),(3,6)\}$.}
    \label{canonicalform}
\end{figure}
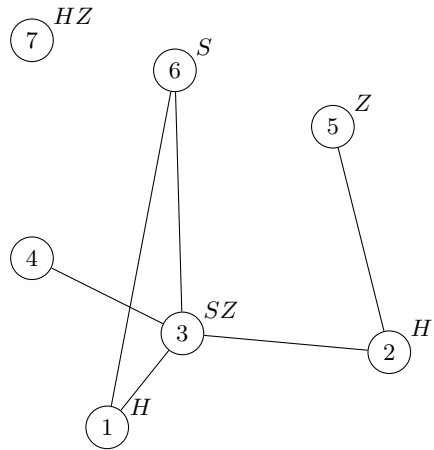
The following result relates the number of $H$'s to the support and helps us prove the canonical form is unique.
\begin{lemma}
\label{amp}
Let $\ket{\psi}=\bigotimes\limits_{i=1}^nc_iz_i\ket{G}$ be in reduced form. Let $k$ be the number of $c_i$ that are equal to $H$. Then the support of $\ket{\psi}$ is $2^{n-k}$.
\end{lemma}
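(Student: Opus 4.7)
The plan is to analyze the stabilizer group of $\ket{\psi}$ by conjugating the graph-state generators by $C$, identify which elements of the resulting stabilizer group are diagonal in the computational basis, and then use the standard stabilizer identity expressing each amplitude squared as a sum over the group.

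First, I would compute $C g_i C^\dagger$ for each generator $g_i = X_i\prod_{j\in \nbhd(i)}Z_j$ of $\ket{G}$. A direct case check on single qubits shows $c_i z_i X_i (c_i z_i)^\dagger$ equals $\pm Z_i$ exactly when $c_i = H$ and equals $\pm X_i$ or $\pm Y_i$ otherwise; similarly $c_j Z_j c_j^\dagger$ equals $X_j$ when $c_j = H$ and $Z_j$ otherwise. The reduced form hypothesis is crucial here: whenever $c_i = H$, every neighbor $j\in\nbhd(i)$ satisfies $c_j\neq H$, so $C g_i C^\dagger = \pm Z_i \prod_{j\in\nbhd(i)} Z_j$ is diagonal. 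If instead $c_i \neq H$, then $C g_i C^\dagger$ still carries $\pm X_i$ or $\pm Y_i$ on qubit $i$ and is therefore not diagonal.

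Next, I would characterize all diagonal elements of the stabilizer group, namely products $\prod_{i\in A}C g_i C^\dagger$ over $A\subseteq [n]$. Set $K:=\{i : c_i=H\}$, so $|K|=k$. For any qubit $i\notin K$, the only generator whose conjugate contributes a non-$Z$ Pauli on qubit $i$ is $g_i$ itself, since $c_i\neq H$ forces $C Z_i C^\dagger = Z_i$ for contributions from neighbors. Hence if $i\notin K$ and $i\in A$, the product still carries $\pm X_i$ or $\pm Y_i$ on qubit $i$ and cannot be diagonal. Therefore $A\subseteq K$ is necessary, and by the first step it is also sufficient. The diagonal subgroup is thus generated by the $k$ independent operators $\{C g_i C^\dagger : i\in K\}$ and has order $2^k$.

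Finally, I would invoke the identity $|\braket{x}{\psi}|^2 = 2^{-n}\sum_S \braopket{x}{S}{x}$, where $S$ ranges over the full stabilizer group: non-diagonal $S$ contribute zero to $\braopket{x}{S}{x}$, while the $2^k$ diagonal stabilizers impose $k$ independent linear parity constraints on $x\in\{0,1\}^n$. For $x$ satisfying all constraints the sum equals $2^k$, giving $|\braket{x}{\psi}|^2 = 2^{k-n}$; for $x$ violating any constraint the sum vanishes. Thus the support set has exactly $2^{n-k}$ elements, as claimed (and the norm works out: $2^{n-k}\cdot 2^{k-n}=1$). The main obstacle is step two — carefully ruling out cancellations between the non-diagonal Paulis arising on qubits $i\notin K$ when multiple generators are multiplied together — and this is precisely where the reduced form condition, together with the fact that only $g_i$ places a non-$Z$ Pauli on qubit $i$ for $i\notin K$, makes the argument go through cleanly.
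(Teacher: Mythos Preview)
Your argument is correct and takes a genuinely different route from the paper's. The paper proves the lemma by an explicit circuit rewrite: using $H_i CZ_{i,j} = CX_{j,i} H_i$, it pushes the $H$ gates on qubits in $A = \{i : c_i = H\}$ through the $CZ$ layer of the graph state, landing on a product of Pauli, phase, and controlled-Pauli operators applied to $\ket{0}_A \otimes \ket{+}_{[n]\setminus A}$, which manifestly has support $2^{n-k}$. Your approach instead works at the level of the stabilizer group: you identify the diagonal (all-$Z$) subgroup, show it has order exactly $2^k$ using the reduced-form condition, and read off the support via the projector identity $\ket{\psi}\bra{\psi} = 2^{-n}\sum_S S$. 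This gives a cleaner structural explanation of \emph{why} the number of $H$'s controls the support --- the $H$-qubits are precisely those whose graph-state generators conjugate to diagonal Paulis --- whereas the paper's rewrite is more constructive, and its intermediate expression is reused verbatim in the subsequent uniqueness proof. So the paper's method does double duty; yours is more self-contained and arguably more illuminating for the lemma in isolation.
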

\begin{proof}
Let $A\equiv \{i\in\left[n\right]|c_i=H\}$, where $k=|A|$. 
We use the identity $H_iCZ_{i,j}=CX_{j,i}H_i$. 
We also define single-qubit Pauli operators $p_i$ as $p_i\equiv c_iz_ic_i^{\dagger}$. Then
\begin{multline}
\label{equation3}
\bigotimes\limits_{i=1}^nc_iz_i\ket{G}=\bigotimes\limits_{i=1}^np_i\prod\limits_{i\in \left[n\right]\setminus A}(c_i)_i\prod\limits_{i\in A}H_i\ket{G}\\
=\bigotimes\limits_{i=1}^np_i\prod\limits_{i\in \left[n\right]\setminus A}(c_i)_i\prod\limits_{i\in A}\left(H_i\prod\limits_{j\in \nbhd(i)}CZ_{i,j}\right)\\
\cdot \prod\limits_{(i,j)\in E(G),i\not\in A,j\not\in A}CZ_{i,j}\ket{+}^{\tens{}n}\\
=\bigotimes\limits_{i=1}^np_i\prod\limits_{i\in \left[n\right]\setminus A}(c_i)_i\prod\limits_{i\in A}\left(\prod\limits_{j\in \nbhd(i)}CX_{j,i}\right)\\
\cdot\prod\limits_{(i,j)\in E(G),i\not\in A,j\not\in A}CZ_{i,j} \prod\limits_{i\in A}H_i\ket{+}^{\tens{}n}.
\end{multline}

$\prod\limits_{i\in A}H_i\ket{+}^{\tens{}n}$ has support $2^{n-k}$ because it consists of a tensor product of $k$ $\ket{0}$'s and $n-k$ $\ket{+}$'s. 
The rest of Equation \ref{equation3} is a product of phase operators, Pauli operators, and controlled-Pauli operators, which does not change the support of $\ket{\psi}$.
\end{proof}
The main advantage of our canonical form is that it uniquely represents a stabilizer state.
\begin{theorem}
If $\ket{\psi}=\ket{\psi'}$ up to global phase, and $\ket{\psi}\equiv \bigotimes\limits_{i=1}^nc_iz_i\ket{G}$ and $\ket{\psi'}\equiv \bigotimes\limits_{i=1}^nc_i'z_i'\ket{G'}$ are both in canonical form, then $G=G'$, $c=c'$, and $z=z'$.
\end{theorem}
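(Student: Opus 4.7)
The plan is to read off every component of the canonical form directly from the state vector, using the decomposition in Equation~\ref{equation3}. Lemma~\ref{amp} immediately forces $k=k'$, where $k$ and $k'$ are the numbers of $H$-entries in $c$ and $c'$, since the support sizes $2^{n-k}$ and $2^{n-k'}$ must agree.

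Next I would recover the set $A=\{i:c_i=H\}$ from the support set alone. By Equation~\ref{equation3}, the support set of $\ket{\psi}$ is the affine subspace
\[
S=\Bigl\{b\in\mathbb{F}_2^n : b_i\oplus\bigoplus_{j\in\nbhd(i)}b_j=x_i\ \text{for all}\ i\in A\Bigr\},
\]
where $x_i\in\{0,1\}$ records whether the effective Pauli $p_i=c_iz_ic_i^{\dagger}$ flips qubit $i$. The defining $k\times n$ matrix $M$ has, in its row for $i\in A$, a leading $1$ at column $i$ and further $1$'s at columns $\nbhd(i)\subseteq\{i+1,\dots,n\}$ (by the canonical form condition), while the reduced form condition forces each pivot column $i\in A$ to contain only that single $1$ (since no edge lies inside $A$). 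Thus $M$ is already in reduced row echelon form, and uniqueness of RREF over $\mathbb{F}_2$ forces the pivot set $A$ and the remainder of $M$ to be intrinsic to $S$. Hence $A=A'$, and the full bipartite adjacency between $A$ and $\bar A$ must agree in $G$ and $G'$. The inhomogeneous shift $x=(x_i)_{i\in A}$ is likewise determined by $S$, and since $x_i=1$ precisely when $c_i=H$ and $z_i=Z$, this recovers $z_i$ for every $i\in A$.

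Finally I would recover the edges inside $\bar A$ together with $c_i$ and $z_i$ for $i\in\bar A$ by comparing amplitudes. Continuing Equation~\ref{equation3} gives
\[
\ket{\psi}=\frac{1}{\sqrt{2^{n-k}}}\sum_{b\in\{0,1\}^{\bar A}}(-1)^{q(b)+L_Z(b)}\,i^{L_S(b)}\,\ket{F(b)},
\]
where $q(b)=\sum_{(i,j)\in E(G),\,i,j\in\bar A}b_ib_j$, $L_S(b)=\sum_{i\in\bar A:\,c_i=S}b_i$, $L_Z(b)=\sum_{i\in\bar A:\,z_i=Z}b_i$, and $\ket{F(b)}$ is a computational basis state fixed by the data already recovered. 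Comparing amplitudes at $b=0$ kills the global phase. At $b=e_j$ for $j\in\bar A$ one obtains $(-1)^{\Delta z_j}i^{\Delta s_j}=1$, and since $i^{\pm1}$ cannot be canceled by a sign, this forces $c_j=c_j'$ and $z_j=z_j'$. Setting $b=e_i+e_j$ then cancels the linear terms and isolates the quadratic coefficient, reading off whether $(i,j)$ is an edge within $\bar A$ and completing $G=G'$.

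The main obstacle is the middle step: recognizing that the canonical form condition is precisely an RREF condition, so that $A$ becomes an intrinsic invariant (the pivot set) of the support subspace rather than a coordinate-dependent labeling. With this identification in hand, the remainder is careful phase bookkeeping, the only subtlety being to keep the $\mathbb{Z}$-valued exponents of $i$ distinct from the $\mathbb{F}_2$-valued exponents of $-1$.
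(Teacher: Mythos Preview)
Your argument is correct and genuinely different from the paper's. Both proofs begin identically, invoking Lemma~\ref{amp} to get $|A|=|A'|$, and both finish by reading off $c_j,z_j$ for $j\in\bar A$ and the edges inside $\bar A$ from amplitudes at weight-one and weight-two vectors. The divergence is entirely in the middle step, recovering $A$ itself. The paper argues by contradiction: assuming $a_1<a_1'$, it builds two projectors $Q_1,Q_2$ onto the qubits outside $A'$ and shows (Proposition~\ref{onestate}, Claim~\ref{claim111}) that the resulting computational basis states differ only at qubit $a_1\in A$, so a further projection onto $[n]\setminus A$ would have support $\ge 2$ on one side and exactly $1$ on the other. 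It then iterates, ``cancelling $H$'s'' to peel off one element of $A$ at a time. Your route is more structural: you observe that the canonical-form condition (neighbours of $H$-vertices lie strictly to the right) together with the reduced-form condition (no edges inside $A$) is \emph{exactly} the statement that the $k\times n$ constraint matrix of the support subspace is already in reduced row echelon form, whence uniqueness of RREF over $\mathbb{F}_2$ pins down $A$ as the pivot set and simultaneously hands you the $A$--$\bar A$ adjacency and the shift $x_A$ (hence $z_i$ for $i\in A$). This buys you all of $A$, the bipartite edges, and $z|_A$ in one stroke, whereas the paper recovers $A$ inductively and only later (implicitly, via the amplitude comparison once all $c_j\in\{I,S,Z,SZ\}$) deals with the remaining data. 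Your identification of the canonical-form condition with an RREF condition is a nice structural explanation of \emph{why} this particular normal form is unique.
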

\begin{proof}
Let $A\equiv \{i\in\left[n\right]|c_i=H\}$ and $A'\equiv \{i\in \left[n\right]|c_i'=H\}$. The supports of $\ket{\psi}$ and $\ket{\psi'}$ are equal, so by Lemma \ref{amp}, $|A|=|A'|\equiv k$. Now let $A=\{a_1,a_2,\dots a_k\}$ and $A'=\{a_1',a_2',\dots,a_k'\}$ where $a_1<a_2<\dots<a_k$, $a_1'<a_2'<\dots<a_k'$. First, important definitions.
\begin{definition}
For a binary string $s$ and a subset $B\subset \left[n\right]$ of the same size, let $\ket{s}\bra{s}_{B}$ be a $n$-qubit projector onto the subspace of $n$-qubit state space spanned by the basis of computational basis states that agree with $s$ on the qubits in $B$. 
We can think of $\ket{s}\bra{s}_{B}$ as stretching the bits in $s$ out in a $n$ dimensional vector to occupy the slots corresponding to qubits in $B$, and
we let $s_i$ denote the bit in $s$ in the slot corresponding to qubit $i$. 
\end{definition}
\begin{definition}
For a single-qubit state $\ket{\varphi}$ and a subset $B\subset \left[n\right]$, let $\ket{\varphi}_B$ be a tensor product of $|B|$ $\ket{\varphi}$'s, placing them in the slots corresponding to the qubits in $B$.
\end{definition}
Now, suppose for the sake of contradiction $a_1<a_1'$. We will apply projectors of the form $\ket{s}\bra{s}_{\left[n\right]\setminus A}$ and $\ket{s}\bra{s}_{\left[n\right]\setminus A'}$ to $\ket{\psi}$ and $\ket{\psi'}$ to derive a contradiction. 
Letting $Q\equiv \ket{s}\bra{s}_{\left[n\right]\setminus A}$, we write
\begin{align}
    Q\bigotimes\limits_{i=1}^nc_iz_i\ket{G}=\bigotimes\limits_{i=1}^nc_iz_i\prod\limits_{(i,j)\in E(G)}CZ_{i,j}Q\ket{+}^{\tens{} n}\label{eqn4}\\
    =\frac{1}{\sqrt{2^{n-k}}}\bigotimes\limits_{i=1}^nc_iz_i\prod\limits_{(i,j)\in E(G)}CZ_{i,j}\ket{+}_A\tens{}\ket{s} \label{eqn5}\\
    =\frac{1}{\sqrt{2^{n-|A|}}}\bigotimes\limits_{i=1}^nc_iz_i\prod\limits_{(i,j)\in E(G)}Z_j^{s_i}\ket{+}_A\tens{}\ket{s}\label{eqn6}\\
    =\frac{1}{\sqrt{2^{n-|A|}}}\bigotimes\limits_{i=1}^np_i\prod\limits_{i\in \left[n\right]\setminus A}(c_i)_i \ket{0}_{A}\tens{}\ket{s}\label{eqn7},
\end{align}where Line \ref{eqn4} follows from the fact that $Q$ commutes with $S$, $Z$, and $CZ$ operators, Line \ref{eqn5} follows from assuming without loss of generality that $i\not\in A$ due to there being no edges in $G$ 
between qubits in $A$, and Line \ref{eqn7} follows by conjugating the $Z$ operators to form Pauli operators $p_i$. Observe that
\begin{proposition}
\label{onestate}
For all binary strings $s$ of length $n-k$, $\ket{s}\bra{s}_{\left[n\right]\setminus A}\ket{\psi}$ and $\ket{s}\bra{s}_{\left[n\right]\setminus A'}\ket{\psi'}$ are computational basis states.
\end{proposition}
\begin{proof}
By Equation \ref{eqn7}, $\ket{s}\bra{s}_{\left[n\right]\setminus A}\ket{\psi}$ is a single computational basis state because it consists of Pauli 
and phase operators applied to a computational basis state, and similarly $\ket{s}\bra{s}_{\left[n\right]\setminus A'}\ket{\psi'}$ is as well.
\end{proof}
In particular, if we consider $Q_1\equiv \ket{u}\bra{u}_{\left[n\right]\setminus A'}$, where $u_{a_1}=1$ and the rest of the $u_i$ equal $0$, and $Q_2\equiv \ket{0}\bra{0}_{\left[n\right]\setminus A'}$, we have
\begin{claim}
\label{claim111}
$Q_1\ket{\psi'}$ and $Q_2\ket{\psi'}$ are non-zero computational basis states that differ only in qubit $a_1$.
\end{claim}
\begin{proof}
By Line \ref{eqn6}, 
\begin{align*}
    Q_1\ket{\psi'}&=\frac{1}{\sqrt{2^{n-k}}}\bigotimes\limits_{i=1}^nc_i'z_i'\prod\limits_{(i,j)\in E(G')}Z_j^{u_i}\ket{+}_{A'}\tens{}\ket{u}\nonumber\\
    &=\frac{1}{\sqrt{2^{n-k}}}\bigotimes\limits_{i=1}^nc_i'z_i'\prod\limits_{j\in N_{G'}(a_1)}Z_j\ket{+}_{A'}\tens{}\ket{u}\nonumber\\
    &=\frac{1}{\sqrt{2^{n-k}}}\prod\limits_{j\in N_{G'}(a_1)}Z_j\bigotimes\limits_{i=1}^nc_i'z_i'\ket{+}_{A'}\tens{}\ket{u},
\end{align*}
and
\begin{align*}
    Q_2\ket{\psi'}&=\frac{1}{\sqrt{2^{n-k}}}\bigotimes\limits_{i=1}^nc_i'z_i'\prod\limits_{(i,j)\in E(G')}Z_j^{u_i}\ket{+}_{A'}\tens{}\ket{0}_{\left[n\right]\setminus A'}\nonumber\\&=\frac{1}{\sqrt{2^{n-k}}}\bigotimes\limits_{i=1}^nc_i'z_i'\ket{+}_{A'}\tens{}\ket{0}_{\left[n\right]\setminus A'}.
\end{align*}
\end{proof}
Let $s''$ be the unique binary string such that \begin{equation*}
    \ket{s''}\bra{s''}_{\left[n\right]\setminus A}Q_1\ket{\psi'}=Q_1\ket{\psi'}.
\end{equation*} 
By Claim \ref{claim111}, $\ket{s''}\bra{s''}_{\left[n\right]\setminus A}Q_2\ket{\psi'}=Q_2\ket{\psi'}$. 
Then, the support of $\ket{s''}\bra{s''}_{\left[n\right]\setminus A}\ket{\psi'}$ is at least $2$. 
However, the support of $\ket{s''}\bra{s''}_{\left[n\right]\setminus A}\ket{\psi}$ is $1$ by Proposition \ref{onestate}. Since $\ket{\psi}=\ket{\psi'}$, this produces the desired contradiction.

Then, we must have $a_1=a_1'$. We cancel the $H$'s from both sides and reduce $k$ by $1$ until $k$ is $0$. 
For all $j$, we have $c_jz_j\in\{I,S,Z,SZ\}.$ If $c_{j}\neq c_{j}'$ or $z_j\neq z_j'$ for some $j$, then the amplitudes of $\ket{0}^{\tens{} j-1}\tens{}\ket{1}\tens{} \ket{0}^{\tens{}n-j}$ in $\ket{\psi}$ and $\ket{\psi'}$ would differ by some power of $i$ that is not $1$. Therefore, $c=c'$ and $z=z'$, and we have $\ket{G}=\ket{G'}$. If $(i,j)\in E(G)$ but $(i,j)\not\in E(G')$ or vice versa, the amplitudes of $\ket{0}_{\{i,j\}}\tens{}\ket{0}^{\tens{}n-2}$ differ. Thus, $G=G'$.
\end{proof}
We show that any stabilizer state can be expressed in our canonical form by a counting argument. 
 In \cite{chp}, it is proven that the number of $n$-qubit stabilizer states is \begin{equation}
    2^n\prod\limits_{k=1}^n(2^{k}+1).
\end{equation}Because our canonical form is unique, it suffices to show the following result.
\begin{lemma}
There are $2^n\prod\limits_{k=1}^n(2^{k}+1)$ $n$-qubit extended graph states in canonical form.
\end{lemma}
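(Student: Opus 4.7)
The plan is to prove this by induction on $n$. Let $g(n)$ denote the number of $n$-qubit extended graph states in canonical form; I will show the recurrence $g(n)=2(2^n+1)\,g(n-1)$ with $g(0)=1$, which unrolls to $g(n)=2^n\prod_{k=1}^n(2^k+1)$, matching the target formula.

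The mechanism is a bijection between canonical forms on $n$ qubits and pairs consisting of a canonical form on $n-1$ qubits together with \emph{extension data} for qubit $n$: a choice of $c_n\in\{I,S,H\}$, of $z_n\in\{I,Z\}$, and of the neighbor set $N_n\subseteq[n-1]$ of vertex $n$ in $G$. Restriction deletes vertex $n$ from $G$ and forgets $c_n,z_n$; the reduced-form clause (no edge between two $H$-qubits) and the condition ``$j>i$ whenever $c_i=H$'' both concern only edges lying inside $[n-1]$ after restriction, and they are inherited verbatim from the canonical form on $[n]$, so the restriction is a canonical form on $n-1$ qubits.

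To count extensions of a fixed base canonical form, I case-split on $c_n$. If $c_n=H$, then for any edge $\{i,n\}$ the canonical condition at endpoint $n$ demands $i>n$, which is impossible; hence $N_n=\emptyset$, only $z_n$ is free, giving $2$ extensions. If $c_n\in\{I,S\}$, then at endpoint $n$ the canonical condition is vacuous, and at the other endpoint $i$ it either does not apply or (if $c_i=H$) demands $n>i$, which holds automatically; thus $N_n$ is arbitrary, giving $2^{n-1}$ edge choices, times $2$ choices for $c_n$ and $2$ for $z_n$, for a total of $2^{n+1}$ extensions. Summing across cases, each base canonical form admits $2+2^{n+1}=2(2^n+1)$ extensions, which is exactly the recurrence needed.

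The main obstacle is not algebraic but bookkeeping: one must verify carefully that restriction genuinely lands in the set of canonical forms on $[n-1]$, and conversely that any choice of extension data attached to any canonical form on $[n-1]$ yields a valid canonical form on $[n]$. Both directions hinge on the fact that $n$ is strictly larger than every element of $[n-1]$, which is precisely what makes the $c_n\in\{I,S\}$ case completely unconstrained and the $c_n=H$ case forbid every edge incident to $n$. Once this bijection is pinned down, the recurrence closes and the lemma follows by induction from the trivial base case $g(0)=1$.
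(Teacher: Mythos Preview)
Your proof is correct and takes essentially the same approach as the paper: both observe that the data attached to vertex $k$ (namely $c_k$, $z_k$, and the edges from $k$ to lower-indexed vertices) can be chosen independently and contributes a factor of $2(2^k+1)$, yielding the product formula. The only cosmetic difference is that the paper presents this as a direct product over $k$ and case-splits on whether the lower-edge set is empty, whereas you phrase it as an induction peeling off qubit $n$ and case-split on whether $c_n=H$.
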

\begin{proof}
We wish to count all possible $c,z,$ and $G$ such that $c_i\in \{I,S,H\}$ and $z_i\in \{I,Z\}$ for all $i\in \left[n\right]$ and whenever $c_i=H$, all the edges in $G$ incident to $i$ connect to higher numbered qubits.
For each qubit $k$, we choose $c_k,z_k$, and all the edges of the form $(i,k)$ where $i<k$. 
If none of the $(i,k)\in E(G)$, then there are no restrictions on $c_k$ and $z_k$, yielding $6$ possibilities. 
Otherwise, the only restriction is $c_k\neq H$, and there are $2^{k-1}-1$ possible choices for the edges, yielding $4(2^{k-1}-1)$ possibilities. 
In total, there are $2^{k+1}+2$ ways to choose $c_k,z_k$, and all the edges of the form $(k,i)$ where $i<k$, and doing so for each $1\le k\le n$ yields all possible extended graph state in canonical form.
Thus, there are $\prod\limits_{k=1}^n(2^{k+1}+2)=2^n\prod\limits_{k=1}^n(2^{k}+1)$ $n$-qubit extended graph states in canonical form.
\end{proof}
\subsection{Simplifying extended graph states}
\label{simplifyalgo}
We demonstrate how to simplify extended graph states to canonical form. 
Like in \cite{stabgraph1}, we repeatedly apply two transformation rules. 
The first, used in \cite{multiparty,stabgraph1}, relates $\ket{L_i(G)}$ to $\ket{G}$.
\begin{theorem}[Van den Nest et al., Hein et al.]
\label{lc}
For any graph state $\ket{G}$ and qubit $x$, \begin{align}
\label{eqn1212}
    \ket{G}=H_xS_x^{\dagger}H_x\prod_{i\in \nbhd(x)}S_i\ket{L_x(G)}.
\end{align}
\end{theorem}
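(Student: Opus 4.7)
The plan is to verify the identity by showing that both sides are stabilized by the same Pauli group (so they agree up to a global phase) and then to fix the phase by comparing a single amplitude.

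I would first set up notation. Write $G':=L_x(G)$ and $N:=\nbhd_G(x)=\nbhd_{G'}(x)$. Local complementation only toggles edges between pairs of vertices in $N$, so the canonical generators $g_i^{G'}:=X_i\prod_{j\in \nbhd_{G'}(i)}Z_j$ coincide with $g_i^G$ for $i=x$ and for $i\notin N\cup\{x\}$; for $i\in N$, the $Z$-supports of $g_i^{G'}$ and $g_i^G$ differ exactly by the set $N\setminus\{i\}$ (symmetric difference), and both contain $Z_x$.

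Next, I would record the single-qubit conjugation rules induced by $U:=H_xS_x^\dagger H_x\prod_{j\in N}S_j$. Setting $V_x:=H_xS_x^\dagger H_x$, a direct calculation gives $V_xXV_x^\dagger=X$ and $V_xZV_x^\dagger=Y$, while $SXS^\dagger=Y$ and $SZS^\dagger=Z$. The generators $g_x^G$ and $g_i^G$ for $i\notin N\cup\{x\}$ are therefore fixed by $U$. For $i\in N$, conjugation sends $X_i\mapsto Y_i$ and the $Z_x$ factor of $g_i^{G'}$ to $Y_x$, leaving the remaining $Z$-factors alone, producing $Y_iY_x\prod_j Z_j$ with $j$ ranging over $\nbhd_{G'}(i)\setminus\{x\}$. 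Multiplying this by $g_x^G=X_xZ_i\prod_{k\in N\setminus\{i\}}Z_k$, the factor $Y_iZ_i$ contributes $iX_i$, the factor $Y_xX_x$ contributes $-iZ_x$, the $Z$'s on $N\setminus\{i\}$ symmetric-difference-cancel with those inherited from $g_i^{G'}$ to collapse the support to $\nbhd_G(i)\setminus\{x\}$, and the phases $i$ and $-i$ cancel. The product is exactly $X_i\prod_{j\in \nbhd_G(i)}Z_j=g_i^G$. Hence $Ug_i^{G'}U^\dagger\cdot g_x^G=g_i^G$, so the image operators $\{Ug_i^{G'}U^\dagger\}$ generate the same stabilizer group as $\{g_i^G\}_{i=1}^n$, and therefore $U\ket{G'}=e^{i\theta}\ket{G}$ for some phase $\theta$.

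To pin down $\theta$, I would compute the $\ket{0}^{\otimes n}$-amplitude of $U\ket{G'}$. Using $\bra{0}S=\bra{0}$ and the one-qubit identity $\bra{0}HS^\dagger H=\tfrac{1-i}{2}\bra{0}+\tfrac{1+i}{2}\bra{1}$, only two amplitudes of $\ket{G'}$ contribute: that of $\ket{0}^{\otimes n}$ and that of the basis vector with a single $1$ at qubit $x$. By the paper's sign convention both amplitudes equal $+1/\sqrt{2^n}$ (since no edge of $G'$ has both endpoints at $x$), so the cross terms sum to $1/\sqrt{2^n}$, matching the $\ket{0}^{\otimes n}$-amplitude of $\ket{G}$; hence $\theta=0$. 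The main obstacle is the $i\in N$ bookkeeping: correctly tracking the symmetric difference in the $Z$-support and verifying that the phases $i$ (from $Y_iZ_i$) and $-i$ (from $Y_xX_x$) cancel exactly. Everything else reduces to routine single-qubit Pauli identities.
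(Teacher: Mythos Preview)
Your argument is correct. The paper does not actually supply a proof of this theorem; it is quoted from the literature (Van den Nest et al.\ and Hein et al.), and your stabilizer-conjugation approach is essentially the standard one found there: check that the local Clifford $U=H_xS_x^\dagger H_x\prod_{j\in N}S_j$ maps the canonical generators of $\ket{L_x(G)}$ into the stabilizer group of $\ket{G}$, then fix the global phase. Your bookkeeping for $i\in N$ is right---the key identity $\nbhd_{G'}(i)=\nbhd_G(i)\triangle(N\setminus\{i\})$ makes the symmetric difference collapse exactly as you claim, and the phases $i$ (from $Y_iZ_i$) and $-i$ (from $Y_xX_x$) indeed cancel. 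The amplitude check at $\ket{0}^{\otimes n}$ is clean and uses the paper's phase convention correctly.

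One small stylistic remark: rather than writing ``$Ug_i^{G'}U^\dagger\cdot g_x^G=g_i^G$'' and letting the reader infer that $Ug_i^{G'}U^\dagger$ lies in the stabilizer of $\ket{G}$, it is slightly cleaner to state directly that $Ug_i^{G'}U^\dagger=g_i^G g_x^G$, which makes it immediate that $\{Ug_i^{G'}U^\dagger\}_i$ and $\{g_i^G\}_i$ generate the same group. But this is cosmetic; the logic is sound.
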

The second, discovered in \cite{stabgraph1}, allows us to simplify extended graph states to reduced form by eliminating pairs of $H$'s applied to connected qubits and also allows us to simplify to canonical form by sliding $H$'s down to smaller numbered qubits.
We present our own proof in Appendix \ref{appendixa} because it uses a different methodology.
\begin{theorem}[Elliot et al.]
\label{hsliding}
Let $(x,y)\in E(G)$. Let $A=N(x)\cup \{x\}$ and $B=N(y)\cup \{y\}$. Then, \begin{align}
\label{eqn1313}
    H_xH_y\ket{G}=Z_xZ_y\prod_{p\in A,q\in B}CZ_{p,q}\ket{G}.
\end{align}
\end{theorem}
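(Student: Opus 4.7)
The plan is to prove Theorem \ref{hsliding} by expanding both sides of Equation \ref{eqn1313} in the computational basis and comparing the resulting amplitudes as polynomials over $\mathbb{F}_2$, sidestepping the iterated local-complementation argument. I first write $\ket{G} = \frac{1}{\sqrt{2^n}}\sum_{w\in\{0,1\}^n}(-1)^{f_G(w)}\ket{w}$ with $f_G(w) = \sum_{(i,j)\in E(G)} w_i w_j$, and I split each bitstring as $w=(w_x, w_y, z)$ where $z$ lists the bits on the qubits other than $x$ and $y$. Using $(x,y)\in E(G)$, the quadratic form factors as $f_G(w) = w_x w_y + w_x s_x(z) + w_y s_y(z) + f_{G-x-y}(z)$ with $s_x(z) = \sum_{j\in \nbhd(x)\setminus\{y\}} z_j$ and analogously $s_y(z)$.

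For the LHS I apply $H_x H_y$ via $H\ket{a} = \frac{1}{\sqrt{2}}\sum_b (-1)^{ab}\ket{b}$, so the coefficient of $\ket{u_x, u_y, z}$ becomes $\frac{1}{2\sqrt{2^n}}\sum_{w_x, w_y}(-1)^{f_G(w) + w_x u_x + w_y u_y}$. Regrouping the exponent and summing out $w_x$ via $\sum_a (-1)^{ab} = 2\delta_{b\equiv 0\pmod 2}$ forces $w_y \equiv u_x + s_x(z)\pmod 2$; back-substituting yields LHS amplitude $\frac{1}{\sqrt{2^n}}(-1)^{f_{G-x-y}(z) + (u_x + s_x(z))(u_y + s_y(z))}$.

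For the RHS every operator is diagonal, so the amplitude at $\ket{w}$ is read off as a sign. The product $\prod_{p\in A, q\in B}CZ_{p,q}$ contributes $(-1)^{\sum_{p,q} w_p w_q} = (-1)^{S_A(w)S_B(w)}$ with $S_A(w) = \sum_{p\in A}w_p$ and $S_B(w) = \sum_{q\in B}w_q$; the diagonal pairs $p=q$ fit in seamlessly because $CZ_{p,p}=Z_p$ and $w_p^2 = w_p$ for $w_p\in\{0,1\}$. Including $(-1)^{w_x + w_y}$ from $Z_x Z_y$, the RHS exponent at $\ket{w}$ is $f_G(w) + w_x + w_y + S_A S_B$. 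Using $A = \nbhd(x)\cup\{x\}$, $B = \nbhd(y)\cup\{y\}$, and $(x,y)\in E$, modulo $2$ we get $S_A \equiv w_x + w_y + s_x(z)$ and $S_B \equiv w_x + w_y + s_y(z)$.

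The last step is the $\mathbb{F}_2$ simplification. Expanding $(w_x + w_y + s_x)(w_x + w_y + s_y)$ uses $(w_x + w_y)^2 = w_x + w_y$, and this newly-created linear piece combines with the explicit $w_x + w_y$ from the prefactor $Z_x Z_y$ as $2(w_x + w_y) \equiv 0$. After that cancellation the surviving cross-terms reassemble with $f_G(w)$ into $f_{G-x-y}(z) + (w_x + s_x(z))(w_y + s_y(z))$, identical to the LHS exponent. The main obstacle is precisely this bookkeeping: tracking the diagonal pairs $p = q \in A\cap B = \{x,y\}\cup(\nbhd(x)\cap\nbhd(y))$ in the double product, and recognizing that $Z_x Z_y$ is the exact correction needed to kill the unwanted linear piece arising from squaring $w_x + w_y$ in characteristic two. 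Once this observation is made, the identity collapses to a routine comparison of two quadratic polynomials.
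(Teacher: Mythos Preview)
Your proof is correct and shares the paper's overall strategy---compare computational-basis amplitudes of both sides---but the execution differs. The paper (Appendix~\ref{appendixa}) leaves the left-hand-side amplitude as the unevaluated four-term sum
\[
\tfrac{1}{2}\sum_{j,k\in\{0,1\}}(-1)^{\,b_1 j+b_2 k+j(a+b)+k(b+c)+jk},
\]
parametrizes the right-hand side by the same five binary quantities $(a,b,c,b_1,b_2)$ (counts of $1$-bits in $\nbhd(x)\setminus\nbhd(y)$, $\nbhd(x)\cap\nbhd(y)$, $\nbhd(y)\setminus\nbhd(x)$, and the bits on $x,y$), and finishes by exhaustively checking the identity on all $2^5$ tuples. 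You instead evaluate the Hadamard sum analytically via $\sum_{a}(-1)^{ab}=2\delta_{b,0}$ to obtain the closed-form exponent $(u_x+s_x)(u_y+s_y)+f_{G-x-y}(z)$, and then match it to the right-hand side by a direct $\mathbb{F}_2$ polynomial simplification using $(w_x+w_y)^2\equiv w_x+w_y$. Your route eliminates the $32$-case verification and makes explicit why the $Z_xZ_y$ prefactor is exactly what is needed (it cancels the spurious linear term $w_x+w_y$ coming from that square); the paper's route trades that algebraic observation for a brute-force check. Note that your $s_x,s_y$ are precisely the paper's $a+b$ and $b+c$, so the two computations are tracking the same data.
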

The runtime of the algorithm, though cubic in the worst case, can be much quicker.
\begin{theorem}
\label{reachability}
There exists an algorithm to simplify an arbitrary extended graph state, $\ket{\psi}\equiv \bigotimes\limits_{i=1}^nC_i\ket{G}$, into canonical form, that runs in $O(nd^2)$, where $d$ is the maximum degree in $G$ encountered during the calculation.
\end{theorem}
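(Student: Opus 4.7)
The plan is to construct an explicit two-phase algorithm that uses Theorems \ref{lc} and \ref{hsliding} as its only rewrite rules, iterating over qubits to bring the state into canonical form.

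In the first phase I would reduce $\bigotimes_i C_i\ket{G}$ to reduced form. Since the single-qubit Clifford group has only $24$ elements modulo global phase, each input $C_i$ can be written as an $O(1)$-length word in the generators $H$ and $S$. I would apply these letters one at a time while maintaining the invariant $c_j\in\{I,S,H\}$ and $z_j\in\{I,Z\}$ for every qubit $j$ already processed. Appending an $S$ or $Z$ to a canonical $c_i$ stays in the allowed set and is handled in $O(1)$ time by a finite lookup table. Appending an $H$ either pushes $c_i$ into $\{H,HZ\}$ (possibly creating a forbidden adjacent pair of $H$'s, which is eliminated by Theorem \ref{hsliding} at cost $O(d^2)$) or lands outside the allowed set, in which case I apply Theorem \ref{lc} at $i$ and then clean up the $S$ factors left on each neighbor by at most one further rewrite per neighbor. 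A case analysis over the $24$ local Cliffords certifies that $O(1)$ rewrites per input letter suffice, giving $O(nd^2)$ cost for phase one.

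In the second phase I enforce the ordering condition for $H$-vertices. I would process qubits in increasing index; for each $i$ with $c_i=H$ whose neighbor set in $G$ contains some $j<i$, apply a constant-length sequence of local complementations and $H$-slides that either deletes the edge $(j,i)$ or shifts the $H$ label from $i$ onto a qubit of higher index. A potential function such as $\Phi=\sum_{i:\,c_i=H}|\{j\in \nbhd(i):j<i\}|$ should strictly decrease at each step, capping the number of rewrites at $O(n)$ and the total cost at $O(nd^2)$. Combining the two phases yields the claimed $O(nd^2)$ runtime.

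The hard part will be phase two. Applying Theorem \ref{lc} at $j$ perturbs every qubit in $\nbhd(j)$ by an $S$ factor, and some of those qubits may already have been certified canonical, so I must verify that each local fix-up does not reintroduce violations at previously-processed qubits. I would address this by choosing, for each offending edge $(j,i)$ with $j<i$, a specific combination of $L_i$, $L_j$, and $H$-sliding that touches only qubits of index $\ge i$, and by verifying via direct computation with the $24$-element local Clifford multiplication table that the invariant is preserved on all lower-indexed qubits. Termination then follows from the strict decrease of $\Phi$.
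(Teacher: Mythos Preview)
Your two-phase outline matches the paper's at a high level, but both phases have real gaps that the paper closes with ideas you have not supplied.

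In Phase 1, the assertion that ``$O(1)$ rewrites per input letter suffice'' is the whole content of the lemma, and it is not established. When you invoke Theorem~\ref{lc} at $i$, every neighbor $j$ picks up an extra $S$ on the graph side; if the local operator at $j$ was $H$ it becomes $HS$, which is not of the form $c_jz_j$ with $c_j\in\{I,S,H\}$. Repairing $j$ may itself require a local complementation at $j$, which pollutes \emph{its} neighbors, and nothing in your proposal rules out a cascade of length $\Omega(n)$ triggered by a single input letter (so the per-letter cost is not $O(d^2)$). The paper avoids per-letter bookkeeping entirely: it first writes each $C_i$ as a Pauli times an alternating $H/S$ word $D_i$ and introduces a global monovariant $M$ (the total number of $H$'s among all $D_i$ plus the number of $D_i$ ending in $SH$) that strictly drops at every rewrite, giving an $O(n)$ bound on the total number of rewrites independent of how they cascade.

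In Phase 2 you propose mixing local complementations with $H$-slides, and you already identify the resulting $S$-pollution of previously processed qubits as ``the hard part.'' The paper's key observation is that this hard part can be sidestepped completely: Theorem~\ref{hsliding} alone, rearranged as $H_x\ket{G}=H_yZ_xZ_y\prod_{p\in A,q\in B}CZ_{p,q}\ket{G}$ for adjacent $x>y$, moves an $H$ from $x$ down to $y$ while emitting only $Z$'s (absorbed into the tracked Pauli) and $CZ$'s (absorbed into the graph)---no $S$ factors appear at all, so nothing cascades. A single sweep $x=n,n-1,\dots,2$ then suffices; the paper argues that once the sweep passes a vertex $p$ still carrying an $H$, $p$ has no lower-indexed neighbors and is therefore untouched by all subsequent rewrites. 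Your proposed move (shifting the $H$ to a \emph{higher}-index vertex) together with an increasing-index sweep would work only if $\Phi$ is shown to decrease, but each $H$-slide toggles $\Theta(d^2)$ edges and can add many new low-index neighbors to the target vertex, so $\Phi$ is not monotone under your moves as stated.
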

\begin{proof}
Multiplying both sides of Equation \ref{eqn1212} by $S_xH_x$, we obtain 
\begin{align}
\label{1}
S_xH_x\ket{G}=H_x\prod_{p,q\in \nbhd(x)}CS_{p,q}\ket{G}.
\end{align}
Multiplying both sides of Equation \ref{1} by $H_xS_xH_xS_x^{\dagger}$, we obtain
\begin{align}
\label{3}
    H_xS_x\ket{G}&=H_xS_xH_xS_x^{\dagger}H_x\prod_{p,q\in \nbhd(x)}CS_{p,q}\ket{G}\nonumber\\
    &=\frac{1+i}{\sqrt{2}}S_1^{3}\prod_{p\in\nbhd(x)}Z_p\prod_{p,q\in \nbhd(x)}CS_{p,q}\ket{G}.
\end{align}
Because $C_i\in \langle H,S\rangle$, each $C_i$ is equivalent to a product of $H$'s and $S$'s up to global phase. Because $HH=I$ and $SS=Z$ are both Pauli operators, $C_i$ is equivalent to a global phase and a Pauli operator applied to an alternating product of $H$'s and $S$'s, which we define as $D_i$. 
Thus we can write $\ket{\psi}=\alpha P \bigotimes\limits_{i=1}^n D_i \ket{G}$ for some constant $\alpha\in \mathbb{C}\setminus\{0\}$ and some Pauli operator $P$. 
In what follows, we do not mention Pauli operators or global phases because we can automatically keep track of them by conjugating them through and updating $\alpha$ and $P$ accordingly. 
We define the useful monovariant and describe an algorithm to decrease it.
\begin{definition}
Let $M$ be the sum of the total number of $H$'s among all $D_p$ for $1\le p\le n$ and the number of $p$ such that $D_p$ ends in $SH$.
\end{definition}
\begin{lemma}
\label{algo}
 If $D_i$ has length at least $2$, we can update $D_i$ and all $D_j$ for $j\in \nbhd(i)$ so that $M$ decreases.
\end{lemma}
\begin{proof}
If $D_i$ ends in $HS$, we apply Equation \ref{3} on qubit $i$, which effectively removes $HS$ from $D_i$ and appends $S$ onto the ends of $D_i$ and all $D_j$. Otherwise, $D_i$ ends in $SH$. 
If some $D_j$ ends in $H$, we apply Theorem \ref{hsliding} with $x=i,y=j$ to remove $2$ $H$'s. 
The last case is if all $D_j$ end in $I$ or $S$, in which case applying Equation \ref{1} on qubit $i$ will change $D_i$ to not end in $SH$.
\end{proof}
Because $M\ge 0$, we can apply the updates in Lemma \ref{algo} a finite number of times until all $D_i$ have length at most $1$, in which case $D_i\in \{I,S,H\}$ for all $i$. Rearranging Equation \ref{eqn1313} and assuming $x>y$, we have
\begin{align}
\label{realhsliding}
    H_x\ket{G}=H_yZ_xZ_y\prod\limits_{p\in A,q\in B}CZ_{p,q}\ket{G},
\end{align}
which we repeatedly apply on qubits $x$ with $D_x$ ending in $H$ whenever $x$ has a neighbor $y$ in $G$ with $y<x$. 
This must terminate since all vertices are at integers at least $1$. 
Now $D_i\in \{I,S,H,SH\}$ for all $i$. For all $i$ such that $D_i=SH$, we apply Equation \ref{1} and simplify, not having to worry about $D_j$ having length greater than $1$ because $D_j$ cannot end in $H$ by assumption. 
After this terminates, we conjugate $P$ through. Using the fact that $X_i\ket{G}=\prod\limits_{p\in \nbhd(i)}Z_i \ket{G}$, and $Y=-iZX$, we can turn all $Y$'s and $X$'s into $Z$'s. Now, we have transformed $\ket{\psi}$ into our canonical form.

Every time we apply Theorem \ref{hsliding}, Equation \ref{1}, or Equation \ref{3}, we perform $O(d^2)$ edge toggles where $d$ is the maximum degree of $G$. 
Initially, $M=O(n)$ because any local Clifford operator can be represented with a finite number of $H$'s. Then, shortening the lengths of all the $D_i$ to $1$ takes $O(nd^2)$ operations. 
Next, we only need to apply Equation \ref{realhsliding} at most $n-1$ times by applying it for $x=n,n-1,n-2,\dots ,2$ in that order. Because the $H$'s move to lower numbered qubits or are eliminated, the only way an $H$ could still exist on a qubit $p$ after the algorithm passes through $p$ the first time is if right before the algorithm passes through $p$, all $q\in N(p)$ satisfy $q>p$. 
In that case, $N(p)$ cannot change once $x\le p$, because $p$ is not connected to any lower numbered qubits.
Therefore, after $x$ reaches $2$, none of the $H$'s can be moved to lower numbered qubits. 
Thus, moving the $H$'s to the lowest possible numbered qubits takes $O(nd^2)$ operations. Removing all $D_i$ that equal $SH$ using Equation \ref{1} takes $O(nd^2)$ operations, and simplifying the Pauli operators takes $O(nd)$ operations, so the total runtime is $O(nd^2)$.
\end{proof}
\section{Graph state stabilizer simulation}
\label{section5}
\subsection{Algorithm}
Graph state simulators of stabilizer circuits are advantageous in that local Clifford gates such as $S$ and $H$ can be applied trivially in $O(1)$ time. 
The bottleneck of a graph state simulator is the application of controlled-Pauli gates, such as $CZ$ gates, which currently can be done in $O(d^2)$ time where $d$ is the maximum degree of the graph encountered during the calculation.

To apply a gate $CZ_{x,y}$ to an extended graph state $\ket{\psi}=\bigotimes\limits_{i=1}^nC_i\ket{G}$, we use the identity \begin{equation*}
    CZ_{x,y}=\frac{1}{2}\left((I+Z_x)+(I-Z_x)Z_y\right),
\end{equation*} conjugating the expression through the $C_i$ so it suffices to apply operators of the form $\frac{1}{2}((I+P_x)+(I-P_x)Q_y)$ to graph states $\ket{G}$ where $P_x$ and $Q_y$ are Hermitian Pauli operators. This motivates the following definition.
\begin{definition}
For Hermitian Pauli operators $P$ and $Q$, let $\ket{\psi_{PQ}}$ be the extended graph state obtained from simplifying the expression
\begin{equation}
    \frac{1}{2}\left( (I+P_1)+(I-P_1)Q_2 \right)\ket{G}.
\end{equation}
For example, $\ket{\psi_{ZZ}}$ is $CZ_{1,2}\ket{G}$, so updating $G$ takes $O(1)$ time.
\end{definition}
Our expressions for $\ket{\psi_{PQ}}$ and the update times based on the expressions are depicted in Table \ref{table1}.
\begin{table*}
\begin{tabular}{|c|c|c|c|}
\hline
\multirow{2}{*}{$(P,Q)$} & \multicolumn{2}{c|}{$\ket{\psi_{PQ}}$}       & \multirow{2}{*}{Update time} \\ \cline{2-3}
                       & $(1,2) \in E(G)$ & $(1,2) \not\in E(G)$ &                              \\ \hline
$(Z,Z)$                  & \multicolumn{2}{c|}{$\vphantom{\left(CZ_{1,2}\ket{G}\right)}CZ_{1,2}\ket{G}$}            & \multirow{3}{*}{$O(d)$}           \\\cline{1-3}
$(Z,X)$                  & \multicolumn{2}{c|}{$\vphantom{\left(\prod\limits_{x\in N_2}CZ_{1,x}\ket{G}\right)}\prod\limits_{x\in N_2}CZ_{1,x}\ket{G}$}            &                              \\ \cline{1-3}
$(Y,Z)$                  & \multicolumn{2}{c|}{$\vphantom{\left(S_2Z_2\prod\limits_{x\in M_1}CZ_{2,x}\ket{G}\right)}S_2Z_2\prod\limits_{x\in M_1}CZ_{2,x}\ket{G}$}            &                              \\ \hline
$(X,X)$                  & $\vphantom{\left(H_1H_2CZ_{1,2}\prod\limits_{x\in M_1,y\in M_2}CZ_{x,y}\ket{G}\right)}H_1H_2CZ_{1,2}\prod\limits_{x\in M_1,y\in M_2}CZ_{x,y}\ket{G}$             & $\prod\limits_{x\in N_1,y\in N_2}CZ_{x,y}\ket{G}$                 & \multirow{3}{*}{$O(d^2)$}           \\[0.5ex] \cline{1-3}
$(Y,X)$                  & $\vphantom{\left(\frac{1-i}{\sqrt{2}}\prod\limits_{x\in M_1}S_x\hspace{0.5em} H_1 \prod\limits_{x\in M_1\triangle M_2}CZ_{1,x}\ket{L_1(G)}\right)}\frac{1-i}{\sqrt{2}}\prod\limits_{x\in M_1}S_x\hspace{0.5em} H_1 \prod\limits_{x\in M_1\triangle M_2}CZ_{1,x}\ket{L_1(G)}$             & $\prod\limits_{x\in M_1\triangle N_2}Z_x\prod\limits_{x,y\in M_1\triangle N_2}CS_{x,y}\prod\limits_{x,y\in M_1}CS_{x,y}\ket{G}$                 &                              \\ \cline{1-3}

$(Y,Y)$                  & $-i\prod\limits_{x,y\in M_1}CS_{x,y}\prod\limits_{x,y\in M_2}CS_{x,y}\ket{G}$             & $\vphantom{\left(\frac{1-i}{\sqrt{2}}\prod\limits_{x\in M_1}S_x H_1 \prod\limits_{x\in M_2}CZ_{1,x}\ket{L_1(G)}\right)}\frac{1-i}{\sqrt{2}}\prod\limits_{x\in M_1}S_x\hspace{0.5em} H_1 \prod\limits_{x\in M_2}CZ_{1,x}\ket{L_1(G)}$                 &                              \\ \hline
\end{tabular}
\caption{A table of formulas for $\ket{\psi_{PQ}}$, where $d=\max(\deg(1),\deg(2))$ and $\triangle$ is the symmetric difference of two sets. With this data, we can compute $\ket{\psi_{PQ}}$ for all possible unordered pairs $(P,Q)$ since $\ket{\psi_{PQ}}$ and $\ket{\psi_{QP}}$ are equal with the roles of qubits $1$ and $2$ flipped, and changing the sign of $P$ changes $\ket{\psi_{PQ}}$ by $Q$. Also, $N_1\equiv \nbhd(1)$, $N_2\equiv \nbhd(2)$, $M_1\equiv N_1\cup \{1\}$, and $M_2\equiv N_2\cup \{2\}$. Note that for $\{P,Q\}\in \{\{Z,Z\},\{Z,X\},\{Y,Z\}\}$, $\ket{\psi_{PQ}}$ consists of $O(d)$ $CZ$ operators applied to $\ket{G}$, whereas for $(P,Q)\in \{\{X,X\},\{Y,X\},\{Y,Y\}\}$, $\ket{\psi_{PQ}}$ consists of $O(d^2)$ $CZ$ operators and $O(d)$ local Clifford operators applied to $\ket{G}$, hence the $O(d^2)$ update time.}
\label{table1}
\end{table*}
When $(P,Q)\in \{(Z,X),(X,X)\}$, formulas for $\ket{\psi_{PQ}}$ were computed in \cite{stabgraph1}. The rest are our own discoveries. We computed these formulas by applying Theorem \ref{khesin} and Theorem \ref{graphmerging}. Since all the proofs are similar, they can be found in Appendix \ref{appendixc}.  
\subsection{Discussion}
To apply a $CZ$ gate to qubits $x$ and $y$ of the extended graph state $\ket{\psi}=\prod\limits_{i=1}^nC_i\ket{G}$, GraphSim \cite{graphsim,kerzner}, the currently widely adopted algorithm, performs local complementations on $x$, $y$, or neighboring qubits of $x$ and $y$, changing $C_x$ and $C_y$ until they are both diagonal. 
Local complementations run in $\Omega(d^2)$, where $d$ is the degree of the vertex at which it was applied. 
When applying a $CZ$ gate using our algorithm, if $P=\pm Z$ or $Q=\pm Z$, then it takes $O(d)$ time and runs much faster than GraphSim.
For example, when $C_x=HSH$ and $C_y=I$, GraphSim would perform a local complementation at qubit $x$, whereas our algorithm would update $\ket{\psi}$ more efficiently, based on the expression in Table \ref{table1} for $\ket{\psi_{YZ}}$.

Because $P$ and $Q$ are each equally likely to be any of $\{\pm X,\pm Y,\pm Z\}$ during a simulation of a quantum circuit, our algorithm outperforms GraphSim approximately $\frac{5}{9}$ of the time, leading to a significant efficiency advantage when $d$ becomes large.

In order to perform $CZ$ updates in under quadratic time, we must find efficient update rules for $\ket{\psi_{PQ}}$ for all multi-sets $\{P,Q\}\in \{\{X,X\},\{Y,Y\},\{X,Y\}\}$. 
We believe such update rules cannot directly be derived by applying the graph state transformation rules, Theorems \ref{lc} and \ref{hsliding}, to the expressions for $\ket{\psi_{XX}}$, $\ket{\psi_{YX}}$, and $\ket{\psi_{YY}}$ in Table \ref{table1} because there will always be edge toggles between two sets of vertices of size $O(d)$.
In fact, we show that finding such update rules is impossible if they update the graph by toggling its edges.
\begin{theorem}
\label{togglinghard}
There exists a family of extended graph states such that applying a $CZ$ gate requires $\Omega(n^2)$ edges of $G$ to be toggled.
\end{theorem}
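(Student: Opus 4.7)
My plan is to exhibit an explicit family of extended graph states on $n=2m+2$ qubits for which the canonical form graph is forced to change by $\Omega(n^2)$ edges after a single $CZ_{1,2}$ application, and then invoke uniqueness of the canonical form to rule out cheaper alternatives.

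\emph{Construction.} Partition $\{3,\dots,n\}$ into equal halves $A=\{3,\dots,m+2\}$ and $B=\{m+3,\dots,2m+2\}$, and let $G$ be the ``double star'' with edge set $\{(1,a):a\in A\}\cup\{(2,b):b\in B\}$. Choose local Cliffords $c_1=c_2=H$, $c_i=I$ for $i\ge 3$, and $z_i=I$ for all $i$. Every neighbor of qubits $1$ and $2$ lies in $\{3,\dots,n\}$, so $\ket{\psi}:=H_1H_2\ket{G}$ is in canonical form.

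\emph{Gate update.} Using $HZH=X$, conjugating $CZ_{1,2}=\tfrac12((I+Z_1)+(I-Z_1)Z_2)$ through $H_1H_2$ turns it into the operator defining $\ket{\psi_{XX}}$ when acting on $\ket{G}$. Since $(1,2)\notin E(G)$, Table~\ref{table1} gives $\ket{\psi_{XX}}=\prod_{x\in N_1,y\in N_2}CZ_{x,y}\ket{G}=\ket{G'}$, where $G'$ is obtained from $G$ by toggling every edge in $A\times B$. Thus $CZ_{1,2}\ket{\psi}=H_1H_2\ket{G'}$; since $N_{G'}(1)=A$ and $N_{G'}(2)=B$ still sit inside $\{3,\dots,n\}$, this updated state is also in canonical form.

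\emph{Lower bound.} By uniqueness of canonical form, the updated state admits exactly one canonical representation, namely $H_1H_2\ket{G'}$. Because $A$ and $B$ are disjoint and $G$ contains no edges of $A\times B$, we get $|E(G)\triangle E(G')|=|A|\cdot|B|=m^2=\Omega(n^2)$, so any algorithm that maintains the state in canonical form must toggle $\Omega(n^2)$ edges to process this single $CZ$.

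\emph{Main obstacle.} The subtle point is linking ``edges toggled'' to a genuine representation-independent bound: one might attempt to absorb the change into a cleverly chosen local Clifford rewrite, yielding a graph close to $G$ but LC-equivalent to $G'$. The cleanest resolution is to measure edge toggles against the canonical form, whose uniqueness (proved earlier in the paper) forces the symmetric-difference count and leaves no room for rewriting along the way.
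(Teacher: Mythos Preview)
Your construction and computation are correct, but the argument establishes only a weaker statement than the theorem. You yourself name the obstacle: an algorithm may output \emph{any} extended graph state $\bigotimes_i C_i'\ket{G''}$ representing $CZ_{1,2}\ket{\psi}$, so a priori some $G''$ locally Clifford equivalent to your $G'$ might lie close to $G$. Your ``resolution'' is to restrict to algorithms that maintain canonical form, but that is a retreat, not a resolution: graph-state simulators such as GraphSim do \emph{not} maintain canonical form, so a bound conditioned on canonical-form maintenance does not constrain them, and the theorem as stated (and as the paper proves it) asserts that for \emph{every} graph $G''$ with $\ket{G''}$ LC-equivalent to the output one has $|E(G)\triangle E(G'')|=\Omega(n^2)$.

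The paper closes exactly this gap by engineering its example so that the output has a small, explicitly enumerable LC orbit. It chooses local Cliffords $HSX$ on both control and target, landing in the $\ket{\psi_{YY}}$ case, so that the output is LC-equivalent to a bridged double star $W_{1,2}$; using Van den Nest's theorem that LC equivalence of graph states is generated by local complementations, the entire orbit of $W_{1,2}$ is computed to be the finite list $\{K,K_a,K_b\}\cup\{K_{a,j}:j\in B\}\cup\{K_{b,i}:i\in A\}\cup\{W_{i,j}\}$, and each member is checked to differ from the chosen initial graph by $\Omega(n^2)$ edges. Your output $G'$ (double star plus the complete bipartite graph on $A\times B$) has no such obviously small orbit, and you give no argument that every graph in its orbit is far from $G$. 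To complete your approach you would need either to enumerate that orbit and bound each member, or to switch to a construction whose orbit is controllable, as the paper does.
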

\begin{proof}
Let $A\subset \left[n\right]$ and $B\equiv \left[n\right]\setminus A$. Let $1\in A$, $2\in B$, $|A|=\Omega(n)$, and $|B|=\Omega(n)$. The following graphs are used in this proof.
\begin{definition}
Let $W_{i,j}$, where $i\in A$ and $j\in B$, be the graph consisting solely of edges incident to either vertex $i$ or vertex $j$, such that vertex $i$ is connected to vertex $j$, vertex $i$ is connected to all vertices in $A\setminus \{i\}$, and vertex $j$ is connected to all vertices in $B\setminus \{j\}$. Let $K$ be the complete bipartite graph with edges between each vertex in $A$ and each vertex in $B$. Let $K_a$ (resp.~$K_b$) be $K$ together with all the edges between vertices in $A$ (resp.~$B$). Let $K_{a,i}$ (resp.~$K_{b,i}$) be the graph where all the vertices in $A$ (resp.~$B$) are connected to each other, and vertex $i$ is connected to every other vertex. Let $G$ be the graph that has all possible edges, except those between vertex $2$ and the vertices in $A$.
\end{definition}
Suppose we want to apply $CZ_{x,y}$ to an extended graph state $\ket{\psi}\equiv \bigotimes\limits_{i=1}^nC_i\ket{G}$ where $C_x=C_y=HSX$. 
Then, $CZ_{x,y}\ket{\psi}=\bigotimes\limits_{i=1}^nC_i\ket{\psi_{YY}}$. When we update $\ket{G}$, regardless of what algorithm we use, we end up with $\bigotimes\limits_{i=1}^nC_i'\ket{G'}$ for some $C_i'$ and $G'$ where $\ket{G'}$ is local Clifford equivalent to $\ket{\psi_{YY}}$. 
Applying Lemma \ref{yy},
\begin{align*}
    \ket{\psi_{YY}}=\frac{1-i}{\sqrt{2}}\prod\limits_{x\in \left[n\right] \setminus \{2\}} S_xH_1\ket{W_{1,2}},
\end{align*}so $G'$ is local Clifford equivalent to $\ket{W_{1,2}}$.
\begin{lemma}
\label{lotsofgraphs}
Let $R$ be the set of all graphs $G'$ that are local Clifford equivalent to $\ket{W_{1,2}}$. Then
\begin{multline}
R=\{K, K_a, K_b\}\cup \{K_{a,i}|i\in B\}\cup \{K_{b,i}|i\in A\}\\ \cup \{W_{i,j}| i\in A,j\in B\}.
\end{multline} 
\end{lemma}
\begin{proof}
By a theorem proved in \cite{van den nest}, the local Clifford equivalence of the graph states $\ket{G'}$ and $\ket{W_{1,2}}$ is equivalent to the existence of a sequence of local complementation operations taking $G'$ to $W_{1,2}$. 
If we let $\mathcal{G}$ be the connected graph of graphs containing $W_{1,2}$ where edges are drawn between two graphs related by a local complementation, then $R=V(\mathcal{G})$.
$\mathcal{G}$ is depicted in Figure \ref{diagram}.
The rest of the proof details traversing $\mathcal{G}$.
For all $i\in A$ and $j\in B$,
\begin{itemize}
\item We consider all the edges in $\mathcal{G}$ emanating from $W_{i,j}$. For all $k\in \left[n\right]\setminus\{i,j\}$, \begin{align*}
    L_i(W_{i,j})&=K_{a,j}\nonumber \\
    L_j(W_{i,j})&=K_{b,i}\nonumber\\
    L_k(W_{i,j})&=W_{i,j}
\end{align*}
\item We consider all the edges in $\mathcal{G}$ emanating from $K_{a,j}$. The case for $K_{b,i}$ is similar. For all $k\in B\setminus \{ j\}$,
\begin{align*}
    L_i(K_{a,j})&=W_{i,j}\nonumber\\
    L_j(K_{a,j})&=K_b\nonumber\\
    L_k(K_{a,j})&=K_{a,j}
\end{align*}
\item We consider all the edges in $\mathcal{G}$ emanating from $K$, $K_a$, or $K_b$.
\begin{align*}
    L_i(K_a)&=K_{b,i}\nonumber\\
    L_j(K_a)&=K\nonumber\\
    L_j(K_b)&=K_{a,j}\nonumber\\
    L_i(K_b)&=K\nonumber\\
    L_i(K)&=K_b\nonumber\\
    L_j(K)&=K_a
\end{align*}
\item The graph $W_{1,2}$ is connected to $W_{i,j}$ in $\mathcal{G}$.
\begin{equation*}
    L_j(L_2(L_i(L_1(W_{1,2}))))=W_{i,j}
\end{equation*}
\end{itemize}
\end{proof}
\begin{figure}
\label{diagram}
\centering
\begin{tikzpicture}[node distance={20mm}, thick, main/.style = {draw, circle,minimum size=3em}] 
\node[main] (1) {$W_{i,j}$}; 
\node[main] (2) [above right of=1] {$K_{a,j}$}; 
\node[main] (3) [below right of=1] {$K_{b,i}$}; 
\node[main] (4) [right of=2] {$K_b$}; 
\node[main] (5) [right of=3] {$K_a$}; 
\node[main] (6) [below right of=4] {$K$}; 
\draw (1) -- (2); 
\draw (1) -- (3); 
\draw (2) -- (4); 
\draw (3) -- (5); 
\draw (4) -- (6); 
\draw (5) -- (6); 
\draw (6) -- node[midway, above right, sloped, pos=0.6] {i} (4); 
\draw (6) -- node[midway, below right, sloped, pos=0.6] {j} (5); 
\draw (4) -- node[midway, above, sloped, pos=0.5] {j} (2); 
\draw (2) -- node[midway, above left, sloped, pos=0.4] {i} (1);
\draw (1) -- node[midway, below left, sloped, pos=0.5] {j} (3); 
\draw (3) -- node[midway, below, sloped, pos=0.6] {i} (5); 
\end{tikzpicture} 
\caption{A depiction of $\mathcal{G}$ in the proof of Lemma \ref{lotsofgraphs}, with undirected edges labeled with the vertex that local complementation is applied to and loop edges omitted. To generate $\mathcal{G}$ in its entirely, let $i$ and $j$ range over all vertices in $A$ and in $B$ respectively.}
\end{figure}
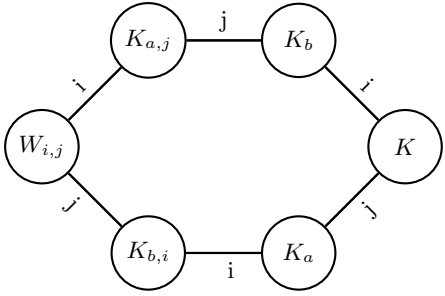
We show that $|E(G)\triangle E(G')|=\Omega(n^2)$ for any $G'\in R$.
Suppose without loss of generality that $\frac{1}{2}n\le |A|\le cn$ where $c$ is some fixed constant less than $1$. Then $K_a$ has $\binom{n}{2}-\binom{|B|}{2}$ edges, which is the most number of edges out of all graphs in $R$. 
\begin{align*}
    |E(G)\triangle E(G')|&\ge |E(G)|-|E(G')|\nonumber\\
    &\ge \left(\binom{n}{2}-|A|\right)-\left(\binom{n}{2}-\binom{|B|}{2}\right)\nonumber\\
    &\ge \binom{|B|}{2}-|A|.
\end{align*} $|E(G)\triangle E(G')|=\Omega(n^2)$ since $|B|\ge (1-c)n=\Omega(n)$.
\end{proof}
\section{Additive properties of stabilizer states}
\label{section4}
\subsection{Graph merging}
We first consider the case of two states related by a Pauli operator. The case when the Pauli operator acts on a single qubit was explored in \cite{multiparty}, and the case when the Pauli operator acts on multiple qubits was explored in \cite{khesin ren,stabgraph2}. 
We state the main theorem in \cite{khesin ren} here. In \cite{stabgraph2} a related theorem is proven but without the case where $k$ is odd.
\begin{theorem}[Khesin, Ren]
\label{khesin}
Let $A=\nbhd(1)\cup \{1\}$, and let $B$ be a set including $1$. Let $k$ be an integer. Then \begin{multline}
\frac{1}{\sqrt{2}}\left(I+i^k\prod_{j\in B}Z_j\right)\ket{G}\\=H_1Z_1\prod_{x\in A,y\in A}CS_{x,y}^k\prod_{x\in A,y\in B}CZ_{x,y}\ket{G}.
\end{multline}
\end{theorem}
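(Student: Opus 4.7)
My plan is to verify the equality by comparing amplitudes in the computational basis on both sides. The right-hand side consists of diagonal operators ($CZ$ and $CS$ gates, with the conventions $CZ_{x,x}=Z_x$ and $CS_{x,x}=S_x$) followed by the single non-diagonal operator $H_1 Z_1 = X_1 H_1$, so a direct calculation is tractable and both sides can be read off as explicit functions on $\{0,1\}^n$.

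First I would expand $\ket{G} = 2^{-n/2}\sum_{y\in\{0,1\}^n}(-1)^{q(y)}\ket{y}$, where $q(y)=\sum_{(i,j)\in E(G)} y_i y_j$. The amplitude of $\ket{z}$ in the LHS is then immediately $2^{-(n+1)/2}(-1)^{q(z)}\bigl(1+i^k(-1)^{b(z)}\bigr)$, with $b(z)=\sum_{j\in B} z_j$.

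For the RHS, I would evaluate the phase introduced by the diagonal product $\prod_{x,y\in A}CS_{x,y}^k \prod_{x\in A,y\in B}CZ_{x,y}$ acting on $\ket{y}$. Splitting each ordered-pair product into diagonal ($x=y$) and off-diagonal pieces and applying the elementary identities $\sum_{x\in A,\,y\in B,\,x\neq y} y_x y_y = \alpha(y)\beta(y) - \sum_{x\in A\cap B}y_x$ and $2\binom{\alpha}{2}+\alpha=\alpha^2$, where $\alpha(y)=\sum_{x\in A}y_x$ and $\beta(y)=\sum_{j\in B}y_j$ are integer counts, I would show that the total phase multiplying $\ket{y}$ collapses to the clean expression $i^{k\alpha(y)^2}(-1)^{\alpha(y)\beta(y)}$. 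The cancellation of the $\sum_{A\cap B}y_x$ terms between the $CZ$ diagonal and off-diagonal contributions is the key simplification here, and accounts for the otherwise mysterious appearance of $Z_1$ on the RHS via the $x=y=1$ term.

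Finally I would apply $H_1Z_1 = X_1 H_1$, which produces an amplitude at $\ket{z}$ equal to a sum over $y_1\in\{0,1\}$ of $2^{-(n+1)/2}(-1)^{q(y)+\alpha(y)\beta(y)+y_1(1+z_1)}\,i^{k\alpha(y)^2}$ with $y=(y_1,z_{-1})$. The main obstacle is verifying that this two-term sum equals the LHS amplitude. Since $\alpha(y)^2\bmod 4$ depends only on the parity of $\alpha(y)$, the argument splits into two cases according to the parity of $\sum_{j\in N(1)}z_j$; in each case, careful bookkeeping of exponents modulo $4$ (using for instance that $(1+m)^2\equiv 1\pmod 4$ when $m$ is even and $\equiv 0\pmod 4$ when $m$ is odd) shows that the two $y_1$ contributions combine exactly into $1+i^k(-1)^{b(z)}$ with the correct $(-1)^{q(z)}$ prefactor, completing the proof.
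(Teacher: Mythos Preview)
The paper does not supply a proof of this theorem; it is quoted as the main result of \cite{khesin ren} and used as a black box elsewhere. So there is no in-paper argument to compare your proposal against.

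That said, your computational-basis verification is correct. The collapse of the diagonal phase to $i^{k\alpha(y)^2}(-1)^{\alpha(y)\beta(y)}$ works exactly as you describe: writing $\gamma=\sum_{x\in A\cap B}y_x$, the off-diagonal $CZ$ exponent is $\alpha\beta-\gamma$ and the diagonal $CZ_{x,x}=Z_x$ exponent is $\gamma$, which cancel to $\alpha\beta$; likewise the $CS$ block gives $k((\alpha^2-\alpha)+\alpha)=k\alpha^2$. After applying $H_1Z_1$, if you set $m=\sum_{j\in N(1)}z_j$ and $b'=\sum_{j\in B\setminus\{1\}}z_j$ so that $\alpha(y)=y_1+m$ and $b(z)=z_1+b'$, the total $(-1)$-exponent (relative to $(-1)^{q(z)}$) simplifies all the way to $(y_1+m)\,b(z)$. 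Thus the two-term sum over $y_1$ always contributes exactly one even-$\alpha$ term, giving $1$, and one odd-$\alpha$ term, giving $i^k(-1)^{b(z)}$, matching the LHS. In fact the case split on the parity of $m$ that you propose is not needed---the sum is uniform once you make this simplification---but the argument goes through either way.
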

We provide an alternative formula for when $k$ is odd that is more concise than previous formulas.
\begin{theorem}
\label{graphmerging}
Let $k=2m+1$. 
Let $A$ be an arbitrary set. Then \begin{multline}
    (I+i^{2m+1}\prod_{p\in A} Z_p)\ket{G}=(1+i^{2m+1}) \\ \cdot \prod_{p\in A}Z_p^{m+1}\prod_{p,q\in A}CS_{p,q}\ket{G}.
\end{multline}
\begin{proof}
Let $\ket{z}$ be some computational basis state, and let $r$ be the number of $i$ in $A$ where the $i$th bit in $z$ is $1$.
Let $f(r)=1$ when $r\equiv 2 \pmod{4}$ or $r\equiv 3\pmod{4}$ and $0$ otherwise and $g(r)=1$ when $r$ is odd and $0$ otherwise. Then,
\begin{multline*}
\bra{z}(I+i^{2m+1}\prod_{p\in A} Z_p)\ket{G}\\=\langle z|G\rangle + i^{2m+1}(-1)^r \langle z|G\rangle,
\end{multline*}
and
\begin{multline*}
\bra{z}(1+i^{2m+1})\prod_{p\in A}Z_p^{m+1}\prod_{p,q\in A}CS_{p,q}\ket{G}\\=(1+i^{2m+1})(-1)^{f(r)}(-1)^{(m+1)r}i^r\ket{G}\\
=(1+i^{2m+1})i^{g(r)}(-1)^{(m+1)r}\ket{G}.
\end{multline*}
The two expressions are equal for $m\in\{0,1\}$ and all $r$.
\end{proof}
\end{theorem}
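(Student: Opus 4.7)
The plan is to prove the identity by checking that both sides agree as matrix elements against an arbitrary computational basis state $\ket{z}$. Since $\ket{G}$ has a non-zero amplitude $\langle z|G\rangle$ on every $\ket{z}$ (up to a sign from the $CZ$ edge pattern), and since every operator appearing on either side commutes with computational-basis projectors, the identity reduces to a purely scalar check depending only on $r \equiv r(z) := |\{p \in A : z_p = 1\}|$.

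For the left-hand side, $\prod_{p\in A}Z_p$ acts on $\ket{z}$ as $(-1)^r$, so
\begin{equation*}
\bra{z}\bigl(I+i^{2m+1}\textstyle\prod_{p\in A}Z_p\bigr)\ket{G} \;=\; \bigl(1 + i^{2m+1}(-1)^r\bigr)\langle z|G\rangle.
\end{equation*}
For the right-hand side, each factor is diagonal in the computational basis. The Pauli prefactor $\prod_{p\in A}Z_p^{m+1}$ contributes $(-1)^{(m+1)r}$. The product $\prod_{p,q\in A}CS_{p,q}$ needs careful bookkeeping: the $|A|$ diagonal terms with $p=q$ give $\prod_{p\in A}S_p$, contributing $i^r$; each unordered off-diagonal pair $\{p,q\}$ is counted twice, and $CS_{p,q}^2 = CZ_{p,q}$ contributes $(-1)^{z_pz_q}$. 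Summing over the $\binom{r}{2}$ active off-diagonal pairs yields $(-1)^{\binom{r}{2}}$. Thus the RHS scalar factor is $(1+i^{2m+1})(-1)^{(m+1)r} i^r (-1)^{\binom{r}{2}}$, which I would write more compactly as $(1+i^{2m+1})(-1)^{f(r)}(-1)^{(m+1)r}i^{r}$, recognizing that $(-1)^{\binom{r}{2}} = (-1)^{f(r)}$ with $f(r) = [r\equiv 2,3\pmod 4]$, and further as $(1+i^{2m+1})(-1)^{(m+1)r}i^{g(r)}$ with $g(r) = [r\text{ odd}]$ by absorbing the remaining $i$-power mod $4$.

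Finally, I would observe that $i^{2m+1}$ and $(-1)^{(m+1)r}$ both depend only on $m \pmod 2$, so it suffices to verify the scalar identity for $m \in \{0,1\}$ (corresponding to $i^k = \pm i$), which in turn reduces to a finite check for $r \in \{0,1,2,3\}$. Each of the eight resulting cases is a direct arithmetic verification.

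The main subtlety here is the bookkeeping of $\prod_{p,q\in A}CS_{p,q}$: correctly recognizing that the product runs over all ordered pairs including $p=q$, that diagonal terms contribute $S_p$ rather than a $CS$ on two qubits, and that off-diagonal terms are double-counted in the ordered product and collapse to $CZ$'s. A secondary point is the use of periodicity in $m$ to reduce an a priori infinite family of identities (one for each odd $k$) to the two cases $k=1,3$; without this reduction the statement would look like it requires an induction on $m$.
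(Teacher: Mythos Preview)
Your proposal is correct and follows essentially the same approach as the paper: both compute the matrix element against an arbitrary computational basis state $\ket{z}$, reduce to a scalar identity depending only on $r$ and $m$, and verify by the finite check $m\in\{0,1\}$. Your write-up is in fact more explicit than the paper's in two places: you spell out why the product $\prod_{p,q\in A}CS_{p,q}$ yields the factor $i^r(-1)^{\binom{r}{2}}=(-1)^{f(r)}i^r$ (the paper simply asserts the result), and you make the periodicity reduction in $m$ and $r$ explicit rather than leaving it implicit.
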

The merging formulas, Theorem \ref{khesin} and Theorem \ref{graphmerging}, can be used to compute measurements of Pauli operators on extended graph states, by conjugating Pauli projectors through the local Clifford operators. These formulas can also be used to prove the correctness of the expressions for $\ket{\psi_{PQ}}$ in Table \ref{table1}, which we do in Appendix \ref{appendixc}.

Considering ways to merge stabilizer states that are not related by a Pauli operator, an interesting formula arises when $x$ and $y$ are not connected in Theorem \ref{hsliding}. 
\begin{theorem}
\label{hsplitting}
Let $x$ and $y$ be two vertices of $G$ that are not connected. Let $A=\nbhd(x)\cup\{x\}$ and $B=\nbhd(y)\cup\{y\}$. Then
\begin{multline}H_xH_y\ket{G}=\\Z_xZ_y\ket{G}+\prod\limits_{p\in\nbhd(x)}Z_p\prod\limits_{q\in\nbhd(y)}Z_q\prod_{x\in A,y\in B}CZ_{x,y}\ket{G}.
\end{multline}
\end{theorem}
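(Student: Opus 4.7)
The plan is to verify the identity by expanding both sides in the computational basis. Write $\ket{G}=2^{-n/2}\sum_{w\in\{0,1\}^n}(-1)^{f_G(w)}\ket{w}$ with $f_G(w)=\sum_{(i,j)\in E(G)}w_iw_j$, and use $H\ket{a}=2^{-1/2}\sum_{b\in\{0,1\}}(-1)^{ab}\ket{b}$ on each Hadamard.

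First I would expand $H_xH_y\ket{G}$ and perform the two character sums over the inner indices $w_x, w_y$. Because $(x,y)\notin E(G)$, the quadratic form $f_G$ contains no $w_xw_y$ cross term, so the two sums decouple and each collapses to a Kronecker delta enforcing a parity constraint: $w_x\equiv \sum_{p\in \nbhd(x)}w_p\pmod{2}$ and $w_y\equiv \sum_{q\in \nbhd(y)}w_q\pmod{2}$. Setting $S_A(w)=\sum_{p\in A}w_p$ and $S_B(w)=\sum_{q\in B}w_q$, the support of $H_xH_y\ket{G}$ is thus exactly $\{w : S_A(w)\equiv S_B(w)\equiv 0\pmod{2}\}$. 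On this support, substituting the constraints back into $f_G$ and using $w_x^2\equiv w_x\pmod 2$ shows the residual phase equals $(-1)^{f_G(w)+w_x+w_y}$, with an overall $2/\sqrt{2^n}$ normalization.

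Next I would compute the RHS in the same basis. The first term $Z_xZ_y\ket{G}$ carries amplitude $(-1)^{f_G(w)+w_x+w_y}/\sqrt{2^n}$. For the second term, the $Z$-prefactor $\prod_{p\in \nbhd(x)}Z_p\prod_{q\in \nbhd(y)}Z_q$ contributes $(-1)^{S_A(w)+S_B(w)+w_x+w_y}$ and the product $\prod_{p\in A,q\in B}CZ_{p,q}$ contributes $(-1)^{S_A(w)S_B(w)}$; here the convention $CZ_{p,p}\equiv Z_p$ on the diagonal pairs $p\in A\cap B=\nbhd(x)\cap \nbhd(y)$ gives the clean identity $\sum_{(p,q)\in A\times B}w_pw_q=S_A(w)S_B(w)$ (via $w_p^2=w_p$). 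Summing the two RHS terms and factoring out $(-1)^{f_G(w)+w_x+w_y}$ leaves the bracket $1+(-1)^{S_A(w)+S_B(w)+S_A(w)S_B(w)}$, which I would evaluate using the $\mathbb{F}_2$ identity $S_A+S_B+S_AS_B\equiv (1+S_A)(1+S_B)+1\pmod{2}$: this bracket equals $2$ when $S_A(w)\equiv S_B(w)\equiv 0\pmod{2}$ and vanishes otherwise, exactly matching the LHS support and phase.

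The main technical obstacle is the sign bookkeeping, in particular noticing that the convention $CZ_{p,p}=Z_p$ is what lets the mixed diagonal terms combine cleanly with the $Z$-prefactor into the single quantity $S_A(w)S_B(w)$. The conceptual point is that $(x,y)\notin E(G)$ is exactly the condition that decouples the $w_x$ and $w_y$ character sums, which explains why the single graph-state formula of Theorem \ref{hsliding} expands here into a sum of two graph states.
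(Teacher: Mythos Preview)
Your argument is correct and follows essentially the same route as the paper: both verify the identity by comparing computational-basis amplitudes on each side. The only difference is bookkeeping---the paper partitions the neighborhood bits into counts $a,b,c$ together with $b_1,b_2$ and finishes by a brute-force check over $(a,b,c,b_1,b_2)\in\{0,1\}^5$, whereas you package everything into the two parities $S_A,S_B$ and close with the single identity $1+(-1)^{S_A+S_B+S_AS_B}=2\cdot\mathbf{1}[S_A\equiv S_B\equiv 0]$, which is a mild but genuine streamlining of the same computation.
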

The proof is technical and included in Appendix \ref{appendixa}.
\subsection{Linearly Dependent Triplets}
We now turn our attention to characterizing linearly dependent triplets of stabilizer states.
The following theorem shows that there are three types.
\begin{theorem}
\label{generalmerge}
Let $\mathcal{S}\equiv \{\ket{\psi_1},\ket{\psi_2},\ket{\psi_3}\}$ be a set of linearly dependent stabilizer states that are not all parallel. Then, up to global phase, one of the three cases must be true
\begin{enumerate}
    \item For some stabilizer state $\ket{\phi}$ and some Pauli operator $P$, \begin{equation}
        \mathcal{S}=\{\ket{\phi},P\ket{\phi},\frac{I+P}{\sqrt{2}}\ket{\phi}\}.
    \end{equation}
    \item 
    \label{case2}For some Clifford operator $C$, $1\le x\le n$, and an extended graph state in reduced form $\ket{\psi}$ such that $x$ is the only value of $i$ such that $c_i\neq H$ and $z_i=I$ whenever $c_i=H$, 
    \begin{equation}
       \label{eqnn2} \mathcal{S}=\{C\ket{0^n},C\ket{\psi},C\left(S_x\ket{\psi}\right)\}.
    \end{equation}
    \item \label{case3}For some Clifford operator $C$, $1\le x<y\le n$, and an extended graph state in reduced form $\ket{\psi}$ such that $x$ and $y$ are the only two values of $i$ such that $c_i\neq H$ and $z_i=I$ whenever $c_i=H$, 
    \begin{equation}
       \label{eqnn3} \mathcal{S}=\{C\ket{0}^{\tens{}n},C\ket{\psi}, C\left(Z_xZ_yCZ_{x,y}\ket{\psi}\right)\}.
    \end{equation}
\end{enumerate}
\end{theorem}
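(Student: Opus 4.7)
The plan is to apply a global Clifford to reduce to a standard form and then do a careful support and amplitude analysis guided by the canonical form. Applying a common Clifford $C^{\dagger}$ to all three states (the conclusion is covariant under Cliffords), we may assume $\ket{\psi_1}=\ket{0}^{\tens{}n}$. Since the three states are pairwise non-parallel up to global phase and linearly dependent, after relabeling we may write $\ket{\psi_3}=a\ket{0^n}+b\ket{\psi_2}$ with $a,b\neq 0$.

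First I would dispose of the case in which $\ket{\psi_2}\propto P\ket{0^n}$ for some Pauli $P$. Then $\ket{\psi_2}$ is a phased computational basis state $\omega\ket{v}$, so $\ket{\psi_3}$ is a superposition of $\ket{0^n}$ and $\ket{v}$; requiring $\ket{\psi_3}$ to be a stabilizer state forces the relative phase of its two amplitudes to be a fourth root of unity, which places the triple in Case 1 with $\ket{\phi}=\ket{0^n}$. Otherwise, I put $\ket{\psi_2}$ into canonical form $\bigotimes_{i=1}^n c_iz_i\ket{G}$ and set $k\equiv|\{i:c_i=H\}|$, so by Lemma \ref{amp} the support of $\ket{\psi_2}$ has size $2^{n-k}\geq 2$.

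Because $\ket{\psi_3}$ differs from $b\ket{\psi_2}$ only at $\ket{0^n}$, its support has size $2^{n-k}+1$ (when $\ket{0^n}\notin\mathrm{supp}(\ket{\psi_2})$), $2^{n-k}-1$ (when $\ket{0^n}\in\mathrm{supp}(\ket{\psi_2})$ and its amplitude cancels), or $2^{n-k}$ otherwise. Being a stabilizer state, this support must be a power of $2$. The first option forces $n-k=0$ and contradicts the previous paragraph. The second forces $n-k=1$, in which case $\ket{\psi_3}$ is a single basis state and one verifies $\ket{\psi_2}\propto\frac{I+P}{\sqrt{2}}\ket{0^n}$ for the Pauli $P$ sending $\ket{0^n}$ to that basis state up to phase, returning us to Case 1. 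The remaining and main case is when $\ket{\psi_2}$ and $\ket{\psi_3}$ share the same support $V$ (a linear subspace containing $0$), differing only in the amplitude at $\ket{0^n}$; this already forces $z_i=I$ whenever $c_i=H$, so the $z$-vector of the canonical form is concentrated on the non-$H$ qubits.

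In this remaining case, I would parameterize the amplitudes of a stabilizer state on $V$ as $i^{f(v)}/\sqrt{|V|}$ where $f\colon V\to\mathbb{Z}/4$ has the form $f=l+2q$ for some lift $l$ of an $\mathbb{F}_2$-linear function and some $\mathbb{F}_2$-quadratic form $q$. The hypothesis that $\ket{\psi_3}$ is proportional to $\ket{\psi_2}$ on $V\setminus\{0\}$ says the difference $g=f'-f$ is constant $\kappa\in\mathbb{Z}/4$ on $V\setminus\{0\}$ with $g(0)=0$ and also lies in the class $l+2q$. Evaluating this constraint on a basis $e_1,\dots,e_m$ of $V$ and on the sums $e_i+e_j$ and $e_i+e_j+e_k$ yields a system over $\mathbb{Z}/4$ forcing either $\kappa=0$ or $\dim V\leq 2$. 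When $\dim V=1$, the solutions $\kappa\in\{1,3\}$ correspond to applying $S_x$ (with $S_x^{\dagger}$ absorbed by renaming $\ket{\psi}$) and give Case 2, while $\kappa=2$ corresponds to the Pauli $Z_x$ and reduces to Case 1; when $\dim V=2$, the unique nontrivial solution $\kappa=2$ is implemented by $Z_xZ_yCZ_{x,y}$ and gives Case 3. The main obstacle is this last combinatorial step: the analysis of quadratic phase functions over $\mathbb{Z}/4$ needed both to rule out $\dim V\geq 3$ and to identify the unique nontrivial modification in dimension $2$, after which relating the resulting diagonal operator back to $S_x$ or $Z_xZ_yCZ_{x,y}$ in the canonical form is a short bookkeeping step.
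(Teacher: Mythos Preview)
Your proposal is correct and follows the same overall architecture as the paper's proof: reduce by a global Clifford so that one state is $\ket{0^n}$, use the power-of-two support constraint to force $0^n\in V$ and to strip off the degenerate cases leading to Case~1, then in the equal-support case study the phase-difference function to bound $\dim V\le 2$ and read off Cases~2 and~3 from the canonical form. The one genuine methodological difference is in how you rule out $\dim V\ge 3$. The paper splits into $\kappa=2$ versus $\kappa\in\{1,3\}$ and uses two separate tricks (Proposition~\ref{pauliexpect} on the ratio state for $\kappa=2$, and the ``imaginary amplitudes are zero or half the support'' fact for odd $\kappa$), whereas you do a single direct evaluation of the phase-difference $g=l+2q$ on basis vectors, pairs, and triples. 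Your argument is more uniform and arguably more elementary; the paper's has the minor advantage of citing facts about stabilizer states that are useful elsewhere. Both arrive at exactly the same case split $|V|\in\{2,4\}$, and your identification of $S_x$ and $Z_xZ_yCZ_{x,y}$ via the non-$H$ coordinates of the canonical form matches the paper's use of Lemma~\ref{amp} and Proposition~\ref{inner}.
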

\begin{proof}
Let $U$ be a Clifford operator such that $\ket{\psi_1}=U\ket{0}^{\tens{}n}$. Let $\ket{\psi}\equiv U^{\dagger}\ket{\psi_2}$ and $\ket{\varphi}\equiv U^{\dagger}\ket{\psi_3}$. Any stabilizer state can be represented up to global phase as \begin{equation*}
    \frac{1}{\sqrt{|V|}}\sum_{x\in V}i^{l(x)}(-1)^{q(x)}\ket{x},
\end{equation*} where $V$ is an affine subspace of $\mathbb{F}_2^n$, $\ell(x)$ is a linear binary function on $n$ bits, and $q(x)$ is a quadratic binary function on $n$ bits. Let $(V,\ell(x),q(x))$ be the corresponding triple for $\ket{\psi}$. Without loss of generality let the first non-zero amplitudes in $\ket{\psi}$ and $\ket{\varphi}$ be positive real numbers. The linear dependence of the state vectors in $\mathcal{S}$ is equivalent to the existence of $\alpha,\beta \in \mathbb{C}\setminus \{0\}$ such that \begin{equation*}
    \frac{1}{\sqrt{|V|}}\ket{0}^{\tens{}n}+\alpha\ket{\psi}=\beta\ket{\varphi}.
\end{equation*}
Note that $|V|$ is a power of $2$. If $|V|=1$, $\ket{\psi}$ is a non-zero computational basis state. Since the non-zero amplitudes in stabilizer states differ from each other by powers of $i$, $\alpha$ must be a power of $i$,  $\alpha\ket{\psi}$ and $\ket{0^n}$ are Pauli related, and $\ket{\varphi}= \frac{\ket{0^n}+\alpha\ket{\psi}}{\sqrt{2}}$.

From now on assume $|V|>1$. Then $0^n\in V$ or else $\ket{\varphi}$ would have $|V|+1\neq 2^m$ non-zero amplitudes and could not be a stabilizer state.
Also note the support set of $\ket{\varphi}$ is either $V$ or $V\setminus\{0^n\}$, and $|V|-1\neq 2^m\quad\forall m\ge2$. Therefore, the only case when the support set of $\ket{\varphi}$ is $V\setminus\{0^n\}$ is if $|V|=2$ and $\alpha=-1$, in which case $\ket{\varphi}$ is a computational basis state, related by a Pauli operator to $\ket{0}^{\tens{}n}$.

From now on the support set of $\ket{\varphi}$ is $V$. Then, $\beta=1+\alpha$ and by comparing non-zero amplitudes of the left and right hand sides, $\frac{\alpha}{1+\alpha}=i^k$ for some $k\in \{1,2,3\}$. 
\begin{claim}
\label{claim1}
If $|V|\ge 8$ and $k=2$, it is not possible for $\ket{\varphi}$ to be a stabilizer state.
\end{claim}
\begin{proof}
Suppose $\ket{\varphi}$ was a stabilizer state. We consider the stabilizer state $\ket{\phi}$ with support set $V$ and quadratic and linear functions equal to the difference of the quadratic and linear functions of $\ket{\varphi}$ and $\ket{\psi}$. 
The un-normalized amplitudes of $\ket{\phi}$ are equal to the ratios of the amplitudes of $\ket{\varphi}$ and $\ket{\psi}$, which are $i^k$ for all non-zero computational basis states and $1$ for $\ket{0^n}$. We use the following proposition to derive a contradiction.
\begin{proposition}
\label{pauliexpect}
Let $\ket{\phi}$ be a stabilizer state. Then for any Pauli operator $P$, $\bra{\phi}P\ket{\phi}\in \{0,1,i,-1,-i\}$.
\end{proposition}
\begin{proof}
Let $\ket{\phi}=C\ket{0^n}$ for some Clifford operator $C$. Then, for some Pauli operator $P'$, $\bra{\phi}P\ket{\phi}=\bra{0^n}C^{\dagger}PC\ket{0^n}=\bra{0^n}P'\ket{0^n}\in \{0,1,i,-1,-i\}$.
\end{proof}
That $0^n\in V$ implies $V$ is a subspace of $\mathbb{F}_2^n$. Let $e\equiv e_1e_2\dots e_n$ be a basis vector of $V$. Let $P\equiv \bigotimes\limits_{i=1}^nX_i^{e_i}$. Then $P\ket{\phi}=\frac{1}{\sqrt{|V|}}\left(\ket{e}-\sum_{x\in V\setminus\{e\}}\ket{x}\right)$, so $\bra{\phi}P\ket{\phi}=\frac{|V|-4}{|V|}\not\in \{0,1,i,-1,-i\}$, contradicting Proposition \ref{pauliexpect}.
\end{proof}
\begin{claim}
\label{claim2}
If $|V|\ge 4$ and $k\in \{1,3\}$, it is not possible for $\ket{\varphi}$ to be a stabilizer state.
\end{claim}
\begin{proof}
As in Claim \ref{claim1}, define $\ket{\phi}$ equal to the stabilizer state whose un-normalized amplitudes are the ratios of the amplitudes of $\ket{\varphi}$ and $\ket{\psi}$, in which case $\ket{\phi}\propto \ket{0^n}\pm i \sum\limits_{x\in V\setminus\{0^n\}}\ket{x}$. 
It is known that in a stabilizer state with its first non-zero amplitude positive and real, the number of pure imaginary amplitudes must be $0$ or half of the support. $\ket{\phi}$ does not satisfy this condition, the desired contradiction.
\end{proof}

By Claims \ref{claim1} and \ref{claim2}, the remaining cases either satisfy $|V|=2$ or $|V|=4$ and $k=2$. 
If $|V|=2$, then $\ket{\psi}$ and $\ket{\varphi}$ are of the form $\ket{\psi}=\frac{\ket{0^n}+i^h\ket{s}}{\sqrt{2}}$ and $\ket{\varphi}=\frac{\ket{0^n}+i^{k+h}\ket{s}}{\sqrt{2}}$ for some $h$ and computational basis state $\ket{s}$. Also, $\ket{0^n}=-\sqrt{2}\alpha \ket{\psi}+\sqrt{2}(1+\alpha)\ket{\varphi}$. 
If $k=2$, then $\alpha=-\frac{1}{2}$ and $-2\alpha\ket{\psi}$ and $2(1+\alpha)\ket{\varphi}$ are Pauli related stabilizer states such that their sum divided by $\sqrt{2}$ is $\ket{0^n}$. If $k=1$, then $\alpha = \frac{i-1}{2}$. If we express $\ket{\psi}$ in reduced form, then $n-1$ of the $c_i$ are equal to $H$ by Lemma \ref{amp}, and we can let $x$ be the unique index $i$ such that $c_i\neq H$. By Proposition \ref{inner}, since $\bra{0^n}\psi\rangle\neq0$, for each $c_i=H$, we have $z_i=I$. Note that $s_x=1$ by Lemma \ref{onestate}, so $S_x\ket{\psi}=\ket{\varphi}$, and we have \begin{equation*}
    \ket{0^n}=\frac{1-i}{\sqrt{2}}\ket{\psi}+\frac{1+i}{\sqrt{2}}S_x\ket{\psi},
\end{equation*}
which corresponds to Case \ref{case2}. If $k=3$, then similar arguments yield the same result with the roles of $\ket{\psi}$ and $\ket{\varphi}$ swapped.

If $|V|=4$ and $k=2$, then $\alpha=-\frac{1}{2}$ and $\beta=\frac{1}{2}$. If we express $\ket{\psi}$ in reduced form, then $n-2$ of the $c_i$ are equal to $H$ by Lemma \ref{amp}, and we can let $x$ and $y$ be the indices $i$ such that $c_i\neq H$. By Proposition \ref{inner}, since $\bra{0^n}\psi\rangle\neq0$, for each $c_i=H$, we have $z_i=I$. By Lemma \ref{onestate}, we can write the computational basis states in $\ket{\varphi}$ and $\ket{\psi}$ as $\ket{i}_x\tens{}\ket{j}_y\tens{} \ket{s_{ij}}$ for $i,j\in \{0,1\}$ and binary strings of length $n-2$ $s_{ij}$. We compute
\begin{multline*}
    \bra{i}_x\tens{}\bra{j}_y\tens{}\bra{s_{ij}} Z_xZ_yCZ_{x,y}\ket{\psi}\\=(-1)^{1-(1-i)(1-j)}\bra{i}_x\tens{}\bra{j}_y\tens{}\bra{s_{ij}}\hspace{0.2ex}\ket{\psi},
\end{multline*}
so we have \begin{equation*}
    \ket{0}^{\tens{}n}=\ket{\psi}+Z_xZ_yCZ_{x,y}\ket{\psi},
\end{equation*}which corresponds to Case \ref{case3}.
\end{proof}
\begin{example}
Small illustrative examples of each of the three cases in Theorem \ref{generalmerge} are shown. Each of the stabilizer states is in canonical form with vertex $1$ being the lowest node in the diagram and vertex $3$ being the highest.
\begin{align}
    \resizebox{2cm}{!}{
\begin{tikzpicture}[baseline={([yshift=-.5ex]current bounding box.center)}]
	\begin{pgfonlayer}{nodelayer}
		\node [style=white dot] (0) at (-0.25, 1) {$H$};
		\node [style=white dot] (1) at (-0.75, -0.5) {$H$};
		\node [style=white dot] (2) at (0.75, 0) {$H$};
		\node [style=none] (3) at (1, 1.5) {};
		\node [style=none] (4) at (2, 0.25) {};
		\node [style=none] (5) at (1, -1) {};
		\node [style=none] (6) at (-1.5, 1.5) {};
		\node [style=none] (7) at (-1.5, -1) {};
	\end{pgfonlayer}
	\begin{pgfonlayer}{edgelayer}
		\draw (6.center) to (7.center);
		\draw (3.center) to (4.center);
		\draw (4.center) to (5.center);
	\end{pgfonlayer}
\end{tikzpicture}}&=
\frac{1}{\sqrt{2}}
\resizebox{2cm}{!}{
\begin{tikzpicture}[baseline={([yshift=-.5ex]current bounding box.center)}]
	\begin{pgfonlayer}{nodelayer}
		\node [style=white dot] (0) at (-0.25, 1) {$H$};
		\node [style=white dot] (1) at (-0.75, -0.5) {$I$};
		\node [style=white dot] (2) at (0.75, 0) {$H$};
		\node [style=none] (3) at (1, 1.5) {};
		\node [style=none] (4) at (2, 0.25) {};
		\node [style=none] (5) at (1, -1) {};
		\node [style=none] (6) at (-1.5, 1.5) {};
		\node [style=none] (7) at (-1.5, -1) {};
	\end{pgfonlayer}
	\begin{pgfonlayer}{edgelayer}
		\draw (6.center) to (7.center);
		\draw (3.center) to (4.center);
		\draw (4.center) to (5.center);
	\end{pgfonlayer}
\end{tikzpicture}}+
\frac{1}{\sqrt{2}}
\resizebox{2cm}{!}{
\begin{tikzpicture}[baseline={([yshift=-.5ex]current bounding box.center)}]
	\begin{pgfonlayer}{nodelayer}
		\node [style=white dot] (0) at (-0.25, 1) {$H$};
		\node [style=white dot] (1) at (-0.75, -0.5) {$Z$};
		\node [style=white dot] (2) at (0.75, 0) {$H$};
		\node [style=none] (3) at (1, 1.5) {};
		\node [style=none] (4) at (2, 0.25) {};
		\node [style=none] (5) at (1, -1) {};
		\node [style=none] (6) at (-1.5, 1.5) {};
		\node [style=none] (7) at (-1.5, -1) {};
	\end{pgfonlayer}
	\begin{pgfonlayer}{edgelayer}
		\draw (6.center) to (7.center);
		\draw (3.center) to (4.center);
		\draw (4.center) to (5.center);
	\end{pgfonlayer}
\end{tikzpicture}}\\
\resizebox{2cm}{!}{
\begin{tikzpicture}[baseline={([yshift=-.5ex]current bounding box.center)}]
	\begin{pgfonlayer}{nodelayer}
		\node [style=white dot] (0) at (-0.25, 1) {$H$};
		\node [style=white dot] (1) at (-0.75, -0.5) {$H$};
		\node [style=white dot] (2) at (0.75, 0) {$H$};
		\node [style=none] (3) at (1, 1.5) {};
		\node [style=none] (4) at (2, 0.25) {};
		\node [style=none] (5) at (1, -1) {};
		\node [style=none] (6) at (-1.5, 1.5) {};
		\node [style=none] (7) at (-1.5, -1) {};
	\end{pgfonlayer}
	\begin{pgfonlayer}{edgelayer}
		\draw (6.center) to (7.center);
		\draw (3.center) to (4.center);
		\draw (4.center) to (5.center);
	\end{pgfonlayer}
\end{tikzpicture}}&=
\frac{1-i}{\sqrt{2}}
\resizebox{2cm}{!}{
\begin{tikzpicture}[baseline={([yshift=-.5ex]current bounding box.center)}]
	\begin{pgfonlayer}{nodelayer}
		\node [style=white dot] (0) at (-0.25, 1) {$I$};
		\node [style=white dot] (1) at (-0.75, -0.5) {$H$};
		\node [style=white dot] (2) at (0.75, 0) {$H$};
		\node [style=none] (3) at (1, 1.5) {};
		\node [style=none] (4) at (2, 0.25) {};
		\node [style=none] (5) at (1, -1) {};
		\node [style=none] (6) at (-1.5, 1.5) {};
		\node [style=none] (7) at (-1.5, -1) {};
	\end{pgfonlayer}
	\begin{pgfonlayer}{edgelayer}
		\draw (6.center) to (7.center);
		\draw (3.center) to (4.center);
		\draw (4.center) to (5.center);
		\draw (1) to (0);
		\draw (0) to (2);
	\end{pgfonlayer}
\end{tikzpicture}}+
\frac{1+i}{\sqrt{2}}
\resizebox{2cm}{!}{
\begin{tikzpicture}[baseline={([yshift=-.5ex]current bounding box.center)}]
	\begin{pgfonlayer}{nodelayer}
		\node [style=white dot] (0) at (-0.25, 1) {$S$};
		\node [style=white dot] (1) at (-0.75, -0.5) {$H$};
		\node [style=white dot] (2) at (0.75, 0) {$H$};
		\node [style=none] (3) at (1, 1.5) {};
		\node [style=none] (4) at (2, 0.25) {};
		\node [style=none] (5) at (1, -1) {};
		\node [style=none] (6) at (-1.5, 1.5) {};
		\node [style=none] (7) at (-1.5, -1) {};
	\end{pgfonlayer}
	\begin{pgfonlayer}{edgelayer}
		\draw (6.center) to (7.center);
		\draw (3.center) to (4.center);
		\draw (4.center) to (5.center);
		\draw (1) to (0);
		\draw (0) to (2);
	\end{pgfonlayer}
\end{tikzpicture}}\\
\resizebox{2cm}{!}{
\begin{tikzpicture}[baseline={([yshift=-.5ex]current bounding box.center)}]
	\begin{pgfonlayer}{nodelayer}
		\node [style=white dot] (0) at (-0.25, 1) {$H$};
		\node [style=white dot] (1) at (-0.75, -0.5) {$H$};
		\node [style=white dot] (2) at (0.75, 0) {$H$};
		\node [style=none] (3) at (1, 1.5) {};
		\node [style=none] (4) at (2, 0.25) {};
		\node [style=none] (5) at (1, -1) {};
		\node [style=none] (6) at (-1.5, 1.5) {};
		\node [style=none] (7) at (-1.5, -1) {};
	\end{pgfonlayer}
	\begin{pgfonlayer}{edgelayer}
		\draw (6.center) to (7.center);
		\draw (3.center) to (4.center);
		\draw (4.center) to (5.center);
	\end{pgfonlayer}
\end{tikzpicture}}&=
\resizebox{2cm}{!}{
\begin{tikzpicture}[baseline={([yshift=-.5ex]current bounding box.center)}]
	\begin{pgfonlayer}{nodelayer}
		\node [style=white dot] (0) at (-0.25, 1) {$I$};
		\node [style=white dot] (1) at (-0.75, -0.5) {$H$};
		\node [style=white dot] (2) at (0.75, 0) {$I$};
		\node [style=none] (3) at (1, 1.5) {};
		\node [style=none] (4) at (2, 0.25) {};
		\node [style=none] (5) at (1, -1) {};
		\node [style=none] (6) at (-1.5, 1.5) {};
		\node [style=none] (7) at (-1.5, -1) {};
	\end{pgfonlayer}
	\begin{pgfonlayer}{edgelayer}
		\draw (6.center) to (7.center);
		\draw (3.center) to (4.center);
		\draw (4.center) to (5.center);
		\draw (1) to (0);
		\draw (1) to (2);
	\end{pgfonlayer}
\end{tikzpicture}}+
\resizebox{2cm}{!}{
\begin{tikzpicture}[baseline={([yshift=-.5ex]current bounding box.center)}]
	\begin{pgfonlayer}{nodelayer}
		\node [style=white dot] (0) at (-0.25, 1) {$Z$};
		\node [style=white dot] (1) at (-0.75, -0.5) {$H$};
		\node [style=white dot] (2) at (0.75, 0) {$Z$};
		\node [style=none] (3) at (1, 1.5) {};
		\node [style=none] (4) at (2, 0.25) {};
		\node [style=none] (5) at (1, -1) {};
		\node [style=none] (6) at (-1.5, 1.5) {};
		\node [style=none] (7) at (-1.5, -1) {};
	\end{pgfonlayer}
	\begin{pgfonlayer}{edgelayer}
		\draw (6.center) to (7.center);
		\draw (3.center) to (4.center);
		\draw (4.center) to (5.center);
		\draw (1) to (0);
		\draw (1) to (2);
		\draw (0) to (2);
	\end{pgfonlayer}
\end{tikzpicture}}
\end{align}
\end{example}

We take a closer look at Cases \ref{case2} and \ref{case3} of Theorem \ref{generalmerge} by considering inner products, revealing the symmetries in non-Pauli-related triplets of linearly dependent stabilizer states.
\begin{theorem}
If two stabilizer states $\ket{\psi_1}$ and $\ket{\psi_2}$ satisfy $\bra{\psi_1}\psi_2\rangle\in \{\frac{i-1}{2},-\frac{1}{2}\}$, then $\ket{\psi_3}$, defined as $\ket{\psi_3}\equiv -(\ket{\psi_1}+\ket{\psi_2})$, is a stabilizer state and satisfies $\bra{\psi_2}\psi_3\rangle=\bra{\psi_3}\psi_1\rangle=\bra{\psi_1}\psi_2\rangle$.
\end{theorem}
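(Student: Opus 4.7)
The plan is to leverage the classification in Theorem \ref{generalmerge} to realize $\ket{\psi_3}$ explicitly as a Clifford operator applied to $\ket{\psi_2}$, up to a global phase. First I would check the norm: a direct expansion yields $\|\ket{\psi_3}\|^2 = 2 + 2\,\mathrm{Re}\bra{\psi_1}\psi_2\rangle = 1$ for both allowed values of the inner product. Next I would reduce to the case $\ket{\psi_1} = \ket{0^n}$ by applying $U^\dagger$, where $U\ket{0^n} = \ket{\psi_1}$ is any Clifford representation of $\ket{\psi_1}$; this preserves inner products, the stabilizer property, and the defining equation $\ket{\psi_3} = -(\ket{\psi_1}+\ket{\psi_2})$.

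The core step is to extract the structure of $\ket{\psi_2}$. I would write $\ket{\psi_2} = \lambda\ket{\psi}$ for $\ket{\psi}$ the reduced-form extended graph state associated to $\ket{\psi_2}$ and $\lambda$ a unit global phase. The magnitude $|\bra{\psi_1}\psi_2\rangle|$ pins down the support size: $|V| = 2$ when $\bra{\psi_1}\psi_2\rangle = (i-1)/2$ and $|V| = 4$ when $\bra{\psi_1}\psi_2\rangle = -1/2$. By Lemma \ref{amp} these correspond to $n-1$ and $n-2$ Hadamards among the $c_i$; and by Proposition \ref{inner} the non-vanishing overlap $\bra{0^n}\psi_2\rangle$ forces $z_i = I$ whenever $c_i = H$. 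These are precisely the hypotheses on $\ket{\psi}$ in Case \ref{case2} and Case \ref{case3} of Theorem \ref{generalmerge} with $C = I$. Matching the amplitude at $\ket{0^n}$ via $\lambda\bra{0^n}\psi\rangle = \bra{\psi_1}\psi_2\rangle$ then gives $\lambda = (i-1)/\sqrt{2}$ in Case \ref{case2} and $\lambda = -1$ in Case \ref{case3}.

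Substituting $\ket{\psi} = \bar\lambda\ket{\psi_2}$ into the Case \ref{case2} identity $\ket{0^n} = \frac{1-i}{\sqrt{2}}\ket{\psi}+\frac{1+i}{\sqrt{2}}S_x\ket{\psi}$ rearranges to $\ket{\psi_3} = iS_x\ket{\psi_2}$, and into the Case \ref{case3} identity $\ket{0^n} = \ket{\psi}+Z_xZ_yCZ_{x,y}\ket{\psi}$ gives $\ket{\psi_3} = Z_xZ_yCZ_{x,y}\ket{\psi_2}$. Either way $\ket{\psi_3}$ is a Clifford operator applied to a stabilizer state and is therefore itself a stabilizer state. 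The two remaining inner product identities follow from $\bra{\psi_2}\psi_3\rangle = -1-\overline{\bra{\psi_1}\psi_2\rangle}$ and $\bra{\psi_3}\psi_1\rangle = -1-\overline{\bra{\psi_1}\psi_2\rangle}$; each evaluates to $\bra{\psi_1}\psi_2\rangle$ for both allowed values. The hard part will be the reduced-form extraction: one must justify that $\ket{\psi_2}$ genuinely fits the structural and phase hypotheses of Case \ref{case2} or Case \ref{case3}, which rests on the uniqueness of the reduced-form representative together with Lemma \ref{amp} and Proposition \ref{inner}.
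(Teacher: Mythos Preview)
Your argument is correct and reaches the goal, but it takes a more structural route than the paper. The paper's proof is almost purely amplitude-based: after conjugating so that $\ket{\psi_1}=\ket{0^n}$, it reads off the computational-basis expansion of $\ket{\psi}=C^\dagger\ket{\psi_2}$ directly from the inner-product value (two terms when $\bra{\psi_1}\psi_2\rangle=\frac{i-1}{2}$, four when $-\frac{1}{2}$), forms $-(\ket{0^n}+\ket{\psi})$ explicitly, and recognizes the result as a stabilizer state by inspection, then checks the two inner products by hand. No reduced form, no Theorem~\ref{generalmerge}. Your route instead passes through the reduced-form machinery (Lemma~\ref{amp}, Proposition~\ref{inner}) to place $\ket{\psi_2}$ in the Case~\ref{case2}/\ref{case3} mold and then invokes the linear identities from the proof of Theorem~\ref{generalmerge} to exhibit $\ket{\psi_3}$ as $iS_x\ket{\psi_2}$ or $Z_xZ_yCZ_{x,y}\ket{\psi_2}$. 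This buys you an explicit Clifford witnessing that $\ket{\psi_3}$ is a stabilizer state, which is a pleasant byproduct the paper's proof does not provide. Two small caveats: the identities you quote from Cases~\ref{case2} and~\ref{case3} must be invoked as standalone facts about any $\ket{\psi}$ of that structural shape (they are, and the verification is a two-line amplitude check), not as consequences of Theorem~\ref{generalmerge} itself, or the logic would be circular; and your remark about ``uniqueness of the reduced-form representative'' is inessential---any reduced-form representative works, since Lemma~\ref{amp} and Proposition~\ref{inner} determine the Hadamard count and the $z_i$ constraints regardless of which one you pick.
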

\begin{proof}
Let $\ket{\psi_1}=C\ket{0^n}$ and $\ket{\psi}\equiv C^{\dagger}\ket{\psi_2}$ for some Clifford operator $C$. 
If $\bra{0^n}\psi \rangle =\frac{i-1}{2}$, then $\ket{\psi}$ is of the form $\frac{i-1}{2}\ket{0^n}+i^k\frac{i-1}{2}\ket{s}$ for some non-zero computational basis state $\ket{s}$ and integer $k$. Then, $\ket{\psi_3}$, which is equal to $C(-\frac{i+1}{2}\ket{0^n}-i^k\frac{i-1}{2}\ket{s})$, is a stabilizer state and 
satisfies $\bra{\psi_2}\psi_3\rangle=\bra{\psi_3}\psi_1\rangle=\frac{i-1}{2}$.
Likewise, if $\bra{0^n}\psi \rangle =-\frac{1}{2}$, then $\ket{\psi}$ is of the form $-\frac{1}{2}(\ket{0^n}+i^{k_1}\ket{s_1}+i^{k_2}\ket{s_2}+i^{k_3}\ket{s_3})$ for some distinct computational basis states $\ket{s_1},\ket{s_2},\ket{s_3}$ and some integers $k_1,k_2,k_3$, so $\ket{\psi_3}$ similarly is a stabilizer state and satisfies
$\bra{\psi_2}\psi_3\rangle=\bra{\psi_3}\psi_1\rangle=-\frac{1}{2}$.
\end{proof}
\subsection{Inner product algorithm}
We now turn our attention to computing inner products between extended graph states. 
Our inner product algorithm has cubic worst case runtime, same as the current best algorithm, based on generator matrices \cite{garcia}. 
Our algorithm is more direct in implementation due to the correspondence between an extended graph state and the gates required to produce it and is also naturally global phase sensitive.
Our algorithm uses the following observation.
\begin{proposition}
\label{inner}
Let $\ket{\psi}\equiv\bigotimes\limits_{i=1}^nc_iz_i\ket{G}$ be in reduced form, and let $A\equiv \{i|c_i=H\}$. Then
\begin{equation}\bra{0}^{\tens{}n}\ket{\psi}=\begin{cases} 
0 & \exists i\in A, z_i=Z\\
\frac{1}{\sqrt{2^{n-|A|}}} &\text{otherwise}
\end{cases}.
\end{equation}
\end{proposition}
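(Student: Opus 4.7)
The plan is to carry out a direct computation of the amplitude $\langle 0^n|\psi\rangle$ by pulling $\bra{0^n}$ through the tensor product of single-qubit operators $c_iz_i$ and then evaluating the resulting inner product with the graph state using its explicit amplitude formula.

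First, I would split the bra across qubits and compute $\bra{0}c_iz_i$ on each factor. For $i\notin A$, the operator $c_iz_i$ is diagonal (since $c_i\in\{I,S\}$ and $z_i\in\{I,Z\}$), and its $(0,0)$ entry is $1$, so $\bra{0}c_iz_i=\bra{0}$. For $i\in A$ with $z_i=I$, we have $\bra{0}H=\bra{+}$, and for $i\in A$ with $z_i=Z$, $\bra{0}HZ=\bra{-}$. Denote the resulting bra by $\bra{\phi}$.

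Next, I expand the $A$-qubit factors in the computational basis: letting $z_A\in\{0,1\}^A$ record which indices in $A$ have $z_i=Z$, we have
\begin{equation*}
\bra{\phi}=\frac{1}{\sqrt{2^{|A|}}}\sum_{y\in\{0,1\}^{A}}(-1)^{y\cdot z_A}\,\bra{y}_A\otimes\bra{0}_{\bar A}.
\end{equation*}
Now I invoke the standard amplitude formula: writing $\ket{G}=\prod_{(i,j)\in E(G)}CZ_{i,j}\ket{+}^{\tens{}n}$, for any computational basis state $\ket{x}$ we have $\langle x|G\rangle=\frac{1}{\sqrt{2^n}}(-1)^{q(x)}$ with $q(x)\equiv\sum_{(i,j)\in E(G)}x_ix_j\pmod 2$. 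For $x=(y,0^{\bar A})$, the only surviving terms of $q$ would come from edges with both endpoints in $A$, but by the reduced form hypothesis no such edges exist, so $q(y,0^{\bar A})=0$ and $\langle y,0^{\bar A}|G\rangle=\frac{1}{\sqrt{2^n}}$.

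Combining these, $\langle 0^n|\psi\rangle=\frac{1}{\sqrt{2^{n+|A|}}}\sum_{y\in\{0,1\}^A}(-1)^{y\cdot z_A}$. If some $i\in A$ has $z_i=Z$ then $z_A\neq 0$ and the character sum vanishes, giving $0$. Otherwise $z_A=0$, the sum equals $2^{|A|}$, and the amplitude becomes $\frac{1}{\sqrt{2^{n-|A|}}}$, as claimed. The only subtlety is making explicit use of the reduced-form condition to kill the quadratic form $q$ on the relevant basis states; everything else is a routine expansion, so no serious obstacle is anticipated.
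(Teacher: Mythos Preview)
Your proof is correct and follows essentially the same approach as the paper: a direct computation of the amplitude that pulls $\bra{0}$ through the local operators and uses the reduced-form condition (no edges within $A$) to kill the graph contribution. The only cosmetic difference is that you expand $\bra{\pm}$ in the computational basis and evaluate a character sum, whereas the paper conjugates $z_p$ through $H$ to $x_p\in\{I,X\}$ and reads off $\bra{0}x_p\ket{0}$ directly; these are the same computation in different notation.
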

\begin{proof}
Note that
\begin{align*}
    \bra{0}^{\tens{}n}\bigotimes\limits_{i=1}^nc_iz_i\ket{G} &=\bra{0}^{\tens{}n}\prod\limits_{p\in A}H_p(z_p)_p\ket{+}^{\tens{}n}\nonumber\\
    &=\frac{1}{\sqrt{2^{n-|A|}}}\bra{0}_A\prod\limits_{p\in A}(x_p)_p\ket{0}^{\tens{}n},
\end{align*}
where $x_p=X$ when $z_p=Z$ and $x_p=I$ when $z_p=I$. If $x_p=X$ for some $p\in A$, then $\bra{0}^{\tens{}n}\ket{\psi}=0$ and otherwise $\bra{0}^{\tens{}n}\ket{\psi}=\frac{1}{\sqrt{2^{n-|A|}}}$.
\end{proof}
We present our algorithm in the proof of the following theorem.
\begin{theorem}
\label{innerprod}
Let $\ket{\psi}\equiv \bigotimes\limits_{i=1}^nC_i\ket{G}$ and $\ket{\psi'}\equiv\bigotimes\limits_{i=1}^nC_i'\ket{G'}$ be two extended graph states. Then $\langle \psi \ket{\psi'}$ can be computed in $O(nd^2)$ time, where $d$ is the maximum degree in $G$ and $G'$ encountered during the calculation.
\end{theorem}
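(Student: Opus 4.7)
The plan is to reduce $\langle\psi|\psi'\rangle$ to evaluating a single amplitude $\langle 0^n|\phi_{\mathrm{final}}\rangle$ of an extended graph state, to which Proposition \ref{inner} applies in $O(n)$ time, by applying to $|\psi'\rangle$ the inverse of an explicit Clifford circuit that prepares $|\psi\rangle$ from $\ket{0}^{\tens{}n}$. The whole computation is carried out in the extended-graph-state representation, and vertex-wise batching of the controlled-$Z$ layer keeps the dominant cost at $O(nd^2)$.

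First, use Theorem \ref{reachability} to bring $|\psi\rangle$ into canonical form $|\psi\rangle = \alpha\bigotimes_{i=1}^n c_iz_i\,|G\rangle$ in $O(nd^2)$ time. This exhibits an explicit Clifford $C = \bigotimes_i c_iz_i \cdot \prod_{(i,j)\in E(G)}CZ_{i,j} \cdot H^{\otimes n}$ with $C\ket{0}^{\tens{}n} = |\psi\rangle$, so that $\langle\psi|\psi'\rangle = \bar\alpha\,\langle 0^n|C^\dagger|\psi'\rangle$. I will then apply $C^\dagger$ to $|\psi'\rangle$ layer by layer, maintaining an extended-graph-state representation throughout. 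The outer layers are cheap: the local Cliffords $\bigotimes_i(c_iz_i)^\dagger$ and the Hadamards $H^{\otimes n}$ are each $O(n)$ updates of single-qubit Clifford wrappers. A final canonicalization by Theorem \ref{reachability} and then Proposition \ref{inner} extract $\langle 0^n|\phi_{\mathrm{final}}\rangle$ in $O(nd^2)$.

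The bottleneck is the middle $CZ$-layer. I process it vertex by vertex: at each $v\in\{1,\ldots,n\}$, apply every gate $CZ_{v,u}$ with $\{v,u\}\in E(G)$ and $u>v$ as one batch of at most $d$ gates. Immediately before the batch, rewrite the current representation of $|\psi'\rangle$ using the local-complementation identity of Theorem \ref{lc} and the simplification moves from the proof of Theorem \ref{reachability} so that the single-qubit Clifford on qubit $v$ lies in the diagonal subgroup $\{I,S,Z,SZ\}$; this costs $O(d^2)$ edge updates. With qubit $v$ diagonalized, conjugating $CZ_{v,u}$ through the wrappers yields the parameter $P=Z$, so each batch gate falls into the $(P,Q)=(Z,\cdot)$ rows of Table \ref{table1}, each costing $O(d)$; the batch costs $O(d^2)$, and the per-vertex cost is $O(d^2)$. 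Summing over all $n$ vertices bounds the $CZ$-layer by $O(nd^2)$.

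The main technical obstacle is verifying the inductive invariant that, on entering vertex $v$, the single-qubit Clifford on qubit $v$ is still short enough to be diagonalized in $O(d^2)$. Inspection of the $(Z,Z)$, $(Z,X)$, and $(Z,Y)$ rows of Table \ref{table1} shows that these updates change the single-qubit Clifford only on qubit $v$ itself (and only by a diagonal factor $SZ$ in the $(Z,Y)$ entry), leaving the wrappers on unprocessed qubits exactly as canonical form produced them. The subtle case of a lone $c_v=H$ in canonical form is handled inside the $v$-th setup phase by a bounded number of combined local-complementation and h-sliding steps (Theorems \ref{lc} and \ref{hsliding}), still within $O(d^2)$. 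Once this invariant is confirmed, adding up all phases yields the advertised $O(nd^2)$ runtime.
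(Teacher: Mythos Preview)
Your proposal is correct and follows essentially the same strategy as the paper: reduce to $\langle 0^n|\,\cdot\,\rangle$, absorb the $CZ$ layer one vertex at a time (the paper's ``star operations''), diagonalize the local Clifford at the current vertex with at most one local complementation or $H$-sliding so that each $CZ$ in the batch lands in the $O(d)$ rows of Table~\ref{table1}, then finish with Theorem~\ref{reachability} and Proposition~\ref{inner}. The initial canonicalization of $\ket{\psi}$ is harmless but unnecessary (the paper simply left-multiplies $C_i^{\dagger}$ onto $C_i'$), and your invariant that the $(Z,\cdot)$ updates only touch the wrapper at $v$ is exactly the observation the paper uses to keep each batch in $O(d^2)$.
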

\begin{proof}
First we apply $C_i^{\dagger}$ to $C_i'$ for each $i$. It suffices to take the inner product of $\ket{G}$ and $\bigotimes\limits_{i=1}^nD_i\ket{G'}$ for local Clifford operators $D_i$. We do so by taking the inner product of $\ket{0}^{\tens{}n}$ and $\ket{\varphi}\equiv\bigotimes\limits_{i=1}^nH\prod\limits_{(i,j)\in E(G)}CZ_{i,j}\bigotimes\limits_{i=1}^nD_i\ket{G}$. 
We first simplify the layer of $CZ$ operators. 
\begin{definition}
\label{batch}
A \textit{star operation} on qubit $p$ is a product of $CZ$ operators, each having one of the qubits it is applied to equal $p$.
\end{definition}
For each $p\in \left[n\right]$, we apply star operations of the form $\prod\limits_{q\in N(p),q>p}CZ_{p,q}$ to $\bigotimes\limits_{i=1}^nD_i\ket{G'}$. We perform updates in $O(d^2)$ time as follows. If $D_p$ takes $Z$ to $\pm Z$ upon conjugation, then for each neighbor $q$ of $p$ we apply $CZ_{p,q}$ by the method described in Section \ref{section5}, conjugating it through $D_p$ and $D_q$ and either applying a normal $CZ$ gate to $G'$, Lemma \ref{xz}, or Lemma \ref{yz}. 
After updating $D_i$ and $G'$, $D_p$ still takes $Z$ to $\pm Z$ upon conjugation since $D_p$ is changed by a diagonal Clifford, so we can repeat the same update process for all qubits $q$. If $D_p$ takes $Z$ to $\pm X$ upon conjugation, we apply Theorem \ref{hsliding} to qubit $p$ and some neighbor $q$ of $p$, which changes $D_p$ to $D_pH$. We proceed as before because $D_p$ now takes $Z$ to $\pm Z$ upon conjugation.
If qubit $p$ does not have a neighbor, then the application of the $X$ operator to qubit $p$ does not change $\ket{G'}$, so applying $CZ_{p,q}$ becomes equivalent to applying some Pauli operator on qubit $q$, which is trivial.
If $D_p$ takes $Z$ to $\pm Y$ upon conjugation, then we apply Theorem \ref{lc} to qubit $p$, which changes $D_p$ to $D_pHS^{\dagger}H$. We proceed as in the first case because $D_p$ takes $Z$ to $\pm Z$.

Next, we append $H$ to $D_i$ for all $i$ and simplify $\bigotimes\limits_{i=1}^nD_i\ket{G'}$ to reduced form, following the algorithm described in Theorem \ref{reachability}. $\bra{\psi}\psi'\rangle$ is equal to the product of the result of Proposition \ref{inner} applied to $\ket{\varphi}$ and the global phase factors produced during the calculation.
The total runtime of the algorithm is $O(nd^2)$.
\end{proof}
\section{Conclusions}
In this paper, we explored the stabilizer formalism through the lens of the graph formalism. 
We created a canonical form for expressing extended graph states in a concise and unique way that improves upon previous reduced forms \cite{mikeike,stabgraph1,stabgraph2}.
No longer is the equivalence test \cite{stabgraph1} needed to test for equivalence, and our canonical form is easier to visualize than matrix-based representations. 
The connections between stablizer states and the properties of their corresponding graphs when expressed in canonical form can be explored in future work.
In addition, when tasked with simplifying a linear combination of stabilizer states into fewer terms, we can now convert each stabilizer state into canonical form with our efficient simplification algorithm and then combine identical representations.
With our improved inner product algorithm that runs in $O(nd^2)$, we can transform and simplify the linear combination using our characterization of linearly dependent triplets, simpler extended graph state merging formula, and extended graph state splitting formula.
This line of work can lead to an algorithm for computing more concise representations of arbitrary quantum states.

We applied our merging formulas to discover new rules that describe the action of $CZ$ gates on arbitrary extended graph states. 
Our novel transformation rules enable us to simplify GraphSim's algorithm for applying controlled-Pauli operators to graph states \cite{graphsim,kerzner} and improve runtime. We apply our transformation rules to prove that it is impossible to update extended graph states in under quadratic time in the number of qubits upon the application of a controlled-Pauli gate by toggling edges.
Therefore, in order to improve graph state simulation, we should consider algorithms that do not simply apply one gate at a time. Whenever multiple $CZ$ gates can be applied consecutively, we can apply star operations following the method described in the proof of Theorem \ref{innerprod} to spread out the $O(d^2)$ update time over multiple $CZ$ gates, improving performance.
Future work to improve graph state simulation could study the relationship between the circuit and the runtime, as well as design more efficient algorithms for simulating certain types of circuits, both those that graph state simulation is already suitable for, such as quantum error-correcting circuits, or other circuits.

We derive a simpler graph merging formula and completely characterize linearly dependent triplets of stabilizer states to extend previous work \cite{khesin ren,garcia2}.
Both our characterization in terms of extended graph states and in terms of inner products reveal much structure in the additive properties of stabilizer states that can possibly be generalized.
Future work can continue characterizing linearly dependent $n$-tuples of stabilizer states for $n\ge 4$ as well as stabilizer decompositions of magic states \cite{bss,bravyi smolin, hq, shir}, using the graph formalism.
The appendices are organized as follows. Appendix \ref{appendixa} contains proofs of Theorems \ref{hsliding} and \ref{hsplitting}, Appendix \ref{appendixb} contains a discussion of improving upper bounds on the stabilizer rank of magic states, and Appendix \ref{appendixc} contains proofs for our graph state transformation rules.

\section*{Acknowledgements}
The authors would like to thank the MIT PRIMES-USA program for making this project possible.

\section{Appendices}
\appendix
\section{Proofs of Theorems \ref{hsliding} and \ref{hsplitting}}
\label{appendixa}
Here we prove Theorem \ref{hsliding}.
\begin{proof}
Without loss of generality let $x=1$ and $y=2$, $b_1,b_2\in \{0,1\}$, and $s$ be some binary string of length $n-2$. It suffices to show \begin{multline*}
    \left(\bra{b_1b_2}\tens{}\bra{s}\right)H_xH_y\ket{G}=\\ \left(\bra{b_1b_2}\tens{}\bra{s}\right)Z_xZ_y\prod\limits_{p\in A,q\in B}CZ_{p,q}\ket{G}.
\end{multline*} 
Let $s_i$ denote the bit in $s$ corresponding to qubit $i$, where $3\le i\le n$. Let $a\equiv |\{p\in A\setminus B | s_p=1\}|$, $b\equiv |\{p\in A\cap B\setminus\{1,2\} | s_p=1\}|$, $c\equiv |\{p\in B\setminus A|s_p=1 \}|$. Then, the left hand side can be evaluated in terms of $a,b,c$ as follows: \begin{multline*}
        \left(\bra{b_1b_2}\tens{}\bra{z}\right)H_1H_2\ket{G}\\=\frac{1}{2} \sum\limits_{(j,k)\in\{0,1\}^2}(-1)^{b_1j+b_2k}\left(\bra{jk}\tens{}\bra{z}\right)\ket{G}\\
    =\frac{1}{2} \sum\limits_{(j,k)\in\{0,1\}^2}(-1)^{b_1j+b_2k+j(a+b)+k(b+c)+jk}\\ \cdot\left(\bra{00}\tens{}\bra{z}\right)\ket{G},
\end{multline*}
while for the right hand side, letting $A'\equiv A\setminus\{1,2\},B'\equiv B\setminus \{1,2\}$,\begin{multline*}
    \left(\bra{b_1b_2}\tens{}\bra{z}\right)Z_xZ_y\prod\limits_{p\in A,q\in B}\ket{G}\\ =\left(\bra{b_1b_2}\tens{}\bra{z}\right)\prod\limits_{p\in A'}CZ_{p,1}CZ_{p,2}\\ \cdot \prod\limits_{q\in B'}CZ_{1,q}CZ_{2,q}\prod\limits_{p\in A',q\in B'}CZ_{p,q}\ket{G}\\=(-1)^{b_1(a+c)+b_2(a+c)+ab+bc+ca+b}\left(\bra{b_1b_2}\tens{}\bra{z}\right)\ket{G}\\=(-1)^{\zeta}\left(\bra{00}\tens{}\bra{z}\right)\ket{G},
\end{multline*}
where $\zeta = b_1(a+c)+b_2(a+c)+ab+bc+ca+b+b_1(a+b)+b_2(b+c)+b_1 b_2$.
It suffices to verify 
\begin{multline*}
    \frac{1}{2}\sum\limits_{(j,k)\in\{0,1\}^2}(-1)^{b_1j+b_2k+j(a+b)+k(b+c)+jk}=(-1)^{\zeta}
\end{multline*}
for all $(a,b,c,b_1,b_2)\in\{0,1\}^5$.
\end{proof}
Here we prove Theorem \ref{hsplitting}.
\begin{proof}
Without loss of generality let $x=1$ and $y=2$, $b_1,b_2\in \{0,1\}$, and $s$ be some binary string of length $n-2$. It suffices to show \begin{multline}
\label{eqna5}
\left(\bra{b_1b_2}\tens{}\bra{s}\right)H_xH_y\ket{G} =\left(\bra{b_1b_2}\tens{}\bra{s}\right)Z_xZ_y\ket{G}\\+\left(\bra{b_1b_2}\tens{}\bra{s}\right)\prod\limits_{p\in A,q\in B}CZ_{p,q}\ket{G}.\end{multline}
Let $s_i$ denote the bit in $s$ corresponding to qubit $i$, where $3\le i\le n$. Let $a\equiv |\{p\in A\setminus (B\cup\{1\}) | s_p=1\}|$, $b\equiv |\{p\in A\cap B| s_p=1\}|$, and $c\equiv |\{p\in B\setminus (A\cup\{1\})|s_p=1 \}|$. Then, the left hand side can be evaluated in terms of $a,b,c$ similarly as in the proof of Theorem \ref{hsliding}:
\begin{multline*}
    \left(\bra{b_1b_2}\tens{}\bra{s}\right)H_xH_y\ket{G}= \frac{1}{2}\\ \cdot \sum\limits_{(j,k)\in\{0,1\}^2}(-1)^{b_1j+b_2k+j(a+b)+k(b+c)}\left(\bra{00}\tens{}\bra{z}\right)\ket{G}.
\end{multline*}
The first term in the right hand side of Equation \ref{eqna5} is
\begin{multline*}
    \left(\bra{b_1b_2}\tens{}\bra{s}\right)Z_xZ_y\ket{G}=(-1)^{b_1+b_2}\left(\bra{b_1b_2}\tens{}\bra{s}\right)\\ \cdot \ket{G}
    =(-1)^{b_1+b_2+b_1(a+b)+b_2(b+c)}\left(\bra{00}\tens{}\bra{s}\right)\ket{G},
\end{multline*}
while the second term in the right hand side is
\begin{multline*}
    \left(\bra{b_1b_2}\tens{}\bra{s}\right)\prod\limits_{p\in\nbhd(x)}Z_p\prod\limits_{q\in\nbhd(y)}Z_q\prod\limits_{p\in A,q\in B}CZ_{p,q}\ket{G}\\
    =(-1)^{a+c}\left(\bra{b_1b_2}\tens{}\bra{s}\right)CZ_{1,2}\\ \cdot \prod\limits_{p\in A}CZ_{p,2}\prod\limits_{q\in B}CZ_{1,q}\prod\limits_{p\in \nbhd(1),q\in \nbhd(2)}CZ_{p,q}\ket{G}\\
    =(-1)^{a+c+b_1b_2+b_1(b+c)+b_2(a+b)+ab+bc+ca+b}\\\cdot \left(\bra{b_1b_2}\tens{}\bra{s}\right)\ket{G}\\
    =(-1)^{a+b+c+b_1b_2+(b_1+b_2)(a+c)+ab+bc+ca}\\ \cdot \left(\bra{00}\tens{}\bra{s}\right)\ket{G}.
\end{multline*}
It suffices to verify
\begin{multline*}
    \frac{1}{2} \sum\limits_{(j,k)\in\{0,1\}^2}(-1)^{b_1j+b_2k+j(a+b)+k(b+c)}\\=(-1)^{b_1+b_2+b_1(a+b)+b_2(b+c)}\\ +(-1)^{a+c+b_1b_2+b_1(a+c)+b_2(a+c)+ab+bc+ca+b}
\end{multline*}
for all $(a,b,c,b_1,b_2)\in\{0,1\}^5$.
\end{proof}
\section{Stabilizer rank of magic states}
\label{appendixb}
Here we discuss our attempts at finding upper bounds on the stabilizer rank of $n$-qubit magic states, which we first define.
\begin{definition}
A \textit{n-qubit magic state} $\ket{T_n}$ is the state $\ket{T}^{\tens{}n}$, where $\ket{T}\equiv \frac{\ket{0}+e^{\frac{\pi i}{4}}\ket{1}}{\sqrt{2}}$.
\end{definition}
\begin{definition}
The $\textit{stabilizer rank}$ $\chi(\ket{\psi})$ of a state $\ket{\psi}$ is the smallest integer $\chi$ such that there exists a set of $\chi$ stabilizer states $S$ such that $\ket{\psi}\in \text{span}(S)$.
\end{definition}
The stabilizer rank of the magic state is deeply tied to the runtimes of classical simulations of quantum circuits and has been explored extensively \cite{bravyi smolin,shir,bss}.
In order to tighten the upper bounds on $\chi(\ket{T_n})$, a 
Metropolis-Hastings numerical search algorithm that applies random transformations to stabilizer states to maximize the projection of the magic state onto their span was developed \cite{bravyi smolin}.
We experimented with different variations of this method and were unable to find stabilizer decompositions for higher numbers of qubits, due to the extremely large search space.
Another method, introduced in \cite{hq}, utilizes \textit{cat states} and \textit{contractions}. It produced the best known upper bounds on $\chi(\ket{T_n})$ by enabling a $6$ qubit decomposition of the $6$ qubit magic state to be found by inspection. However, for higher qubit states, inspection cannot be used and computing stabilizer decompositions once again becomes difficult.

Therefore, we tried a new method, which was to represent the $n$ qubit magic state, $(T\ket{+})^{\tens{} n}$, as a linear combination of extended graph states written in our canonical form. By comparing pairs of extended graph states, we can see whether they can be merged together. 
We were able to use Mathematica to express the $n$ qubit magic state as a sum of $2^{\frac{n}{2}}$ stabilizer states. At this point, though, we were not able to apply any more merges. Future work could try to develop more general merging criteria and formulas involving more than three extended graph states. Then, a computer could continually transform the sum of extended graph states following some sort of heuristic that allows it to stumble upon an optimal decomposition with some luck. 
In order to find such a heuristic, it would be useful to study properties of sums of extended graph states that provide more insights into the structure of low-rank stabilizer decompositions. 
So, we converted the stabilizer decompositions found in \cite{bravyi smolin,hq} into our canonical form to see if we could glean any insights about their structure. We provide examples of stabilizer decompositions for two special cases, $n=3$ and $n=6$.
\begin{align}
    &(T\ket{+})^{\tens{} 3}=\frac{i-e^{i\frac{\pi}{4}}}{2}Z_1Z_2Z_3\ket{I_3}\nonumber\\&-\frac{i+e^{\frac{\pi i}{4}}}{2}Z_1Z_2Z_3\ket{K_3}+\frac{1+e^{i\frac{\pi}{4}}}{2}H_1H_2S_3\ket{S_{3,3}}\nonumber\\
    &(T\ket{+})^{\tens{}6}= -\frac{\sqrt{i}}{2\sqrt{2}}H_6(HZ)^{\tens{}6}\ket{S_{6,6}}\nonumber\\&+\frac{1}{2\sqrt{2}}H^{\tens{}6}H_6S_6Z_6\ket{S_{6,6}}-\frac{i}{2}H_1S_1S^{\tens{}6}\ket{S_{6,1}}\nonumber\\&+\frac{i\sqrt{i}}{2}H_1\ket{S_{6,1}}-\frac{1}{2}H_1S_1Z_1(SZ)^{\tens{}6}\ket{K_6}\nonumber\\&-\frac{\sqrt{i}}{2}H_1Z_1Z^{\tens{}6}\ket{K_6},
\end{align}
where $S_{n,i}$ is the star graph on $n$ vertices with central vertex $i$. Even for the $3$ qubit case, there are multiple ways to decompose the magic state into $3$ stabilizer states, yet in all of the ways, there seems to be an empty graph, a complete graph, and a star graph. Future work can completely characterize the $3$ qubit case and move on to higher cases.
\section{Proofs of CZ transformation rules}
\label{appendixc}
Here we prove the expressions for $\ket{\psi_{PQ}}$ in Table \ref{table1}. $\ket{\psi_{ZZ}}$ is trivial.
\begin{lemma}[Elliot et al.]
\label{xz}
    \begin{equation}
        \ket{\psi_{ZX}}=\prod\limits_{x\in N_2}CZ_{1,x}\ket{G}.
    \end{equation}
\end{lemma}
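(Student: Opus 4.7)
The plan is to rewrite both sides in terms of the two orthogonal projectors $\Pi_0 \equiv \frac{1}{2}(I + Z_1)$ and $\Pi_1 \equiv \frac{1}{2}(I - Z_1)$ onto the $\pm 1$ eigenspaces of $Z_1$, and then match them via the graph state stabilizer identity at qubit $2$.

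First I would rewrite the definition of $\ket{\psi_{ZX}}$ by grouping terms as $(\Pi_0 + \Pi_1 X_2)\ket{G}$. Then, using the fact that the stabilizer generator $g_2 = X_2 \prod_{j \in N_2} Z_j$ fixes $\ket{G}$, so that $X_2\ket{G} = \prod_{j \in N_2} Z_j \ket{G}$, the left hand side becomes
\[
\ket{\psi_{ZX}} = \Big(\Pi_0 + \Pi_1 \prod_{j \in N_2} Z_j\Big)\ket{G}.
\]

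Second, I would expand the operator $\prod_{x \in N_2} CZ_{1,x}$ on the right. Each factor decomposes as $CZ_{1,x} = \Pi_0 + \Pi_1 Z_x$, and since $\Pi_0, \Pi_1$ are orthogonal idempotents ($\Pi_0\Pi_1 = 0$, $\Pi_i^2 = \Pi_i$) and every $Z_x$ commutes with $Z_1$, a short induction on $|N_2|$ yields
\[
\prod_{x \in N_2} CZ_{1,x} = \Pi_0 + \Pi_1 \prod_{x \in N_2} Z_x,
\]
so that applying this operator to $\ket{G}$ reproduces the expression above and closes the argument.

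The proof has no deep obstacle; the one small subtlety to flag is the case when $1 \in N_2$, i.e.\ the edge $(1,2)\in E(G)$, because then one factor on the right is $CZ_{1,1} = Z_1$. The projector identities $Z_1 = \Pi_0 - \Pi_1$ and $\Pi_1 Z_1 = -\Pi_1$ make the decomposition $CZ_{1,x} = \Pi_0 + \Pi_1 Z_x$ still valid when $x = 1$, so the induction absorbs this case uniformly and no separate analysis is required.
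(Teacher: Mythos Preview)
Your proof is correct, and it takes a genuinely different route from the paper's. The paper proves this lemma by applying the Khesin--Ren merging formula (Theorem~\ref{khesin}) twice: it first splits $(I\pm Z_1)\ket{G}$ into $H_1$-terms on an auxiliary graph $G'$ with the edges at vertex $1$ removed, commutes $X_2$ through, re-merges using Theorem~\ref{khesin} in the opposite direction, and then simplifies. This forces a case split on whether $(1,2)\in E(G)$, because the sign picked up when $X_2$ passes the $Z$-string depends on whether $1\in N_2$.

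Your argument is considerably more elementary: you only use the projector decomposition $CZ_{1,x}=\Pi_0+\Pi_1 Z_x$ together with the defining stabilizer relation $X_2\ket{G}=\prod_{j\in N_2}Z_j\ket{G}$, and the orthogonal-idempotent algebra collapses the product in one line. The edge case $1\in N_2$ is absorbed automatically, as you note, via $\Pi_1 Z_1=-\Pi_1$ and the paper's convention $CZ_{1,1}=Z_1$, so no case analysis is needed. What the paper's heavier approach buys is uniformity of method: the same merging machinery (Theorems~\ref{khesin} and~\ref{graphmerging}) is what drives the proofs of the harder entries $\ket{\psi_{YX}}$, $\ket{\psi_{YY}}$, $\ket{\psi_{XX}}$, where no single stabilizer identity suffices and a direct projector argument would not close. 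For the $(Z,X)$ case specifically, however, your proof is cleaner.
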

\begin{proof}
 Let $G'$ be the graph formed from $G$ where all the edges incident to $1$ are removed. Suppose $1$ is connected to $2$ in $G$. Applying Theorem \ref{khesin}, we have
\begin{multline*}
\frac{1}{2}\left(\underbrace{(I+Z)_1\ket{G}}_{A=M_1,B=\{1\}}+X_2\underbrace{(I-Z)_1\ket{G}}_{A=M_1,B=\{1\}}\right)\\= \frac{1}{\sqrt{2}}H_1\ket{G'}+X_2\frac{1}{\sqrt{2}}H_1Z_1\left(\prod\limits_{x\in N_1}Z_x\right)\ket{G'}\\
=\frac{1}{\sqrt{2}}H_1\underbrace{\left(I-Z_1\prod\limits_{x\in N_1}Z_x\prod\limits_{x\in N_2\setminus\{1\}}Z_x\right)\ket{G'}}_{A=\{1\},B=N_1\triangle N_2}\\
=\prod_{x\in N_1\triangle N_2}CZ_{1,x}\ket{G'}
=\prod_{x\in N_2}CZ_{1,x}\ket{G}
\end{multline*}
Now suppose $1$ is not connected to $2$ in $G$. We have
\begin{multline*}
\frac{1}{2}\left(\underbrace{(I+Z)_1\ket{G}}_{A=M_1,B=\{1\}}+X_2\underbrace{(I-Z)_1\ket{G}}_{A=M_1,B=\{1\}}\right)\\= \frac{1}{\sqrt{2}}H_1\ket{G'}+X_2\frac{1}{\sqrt{2}}H_1Z_1\left(\prod\limits_{x\in N_1}Z_x\right)\ket{G'}\\
=\frac{1}{\sqrt{2}}H_1\underbrace{\left(I+Z_1\prod\limits_{x\in N_1}Z_x\prod\limits_{x\in N_2}Z_x\right)\ket{G'}}_{A=\{1\},B=M_1\triangle N_2}\\
=Z_1\prod_{x\in M_1\triangle N_2}CZ_{1,x}\ket{G'}=\prod_{x\in N_2}CZ_{1,x}\ket{G}
\end{multline*}
\end{proof}
\begin{lemma}
\label{yz}
\begin{equation}
    \ket{\psi_{YZ}}=S_2Z_2\prod\limits_{x\in M_1}CZ_{2,x}\ket{G}.
\end{equation}
\end{lemma}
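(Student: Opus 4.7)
The plan is to mirror the structure of the proof of Lemma \ref{xz}: expand the defining expression for $\ket{\psi_{YZ}}$, use the graph state stabilizer to rewrite $Y_1\ket{G}$ as a diagonal operator applied to $\ket{G}$, and then recognize the result as a controlled unitary that agrees with $S_2Z_2\prod_{x\in M_1}CZ_{2,x}$ on $\ket{G}$.

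First, the graph state stabilizer $g_1=X_1\prod_{x\in N_1}Z_x$ fixes $\ket{G}$, so $X_1\ket{G}=\prod_{x\in N_1}Z_x\ket{G}$. Writing $Y_1=-iZ_1X_1$ and setting $R\equiv \prod_{x\in M_1}Z_x$, I get $Y_1\ket{G}=-iR\ket{G}$. Since $[Y_1,Z_2]=0$, I would then regroup
\begin{equation*}
\ket{\psi_{YZ}}=\tfrac{1}{2}\bigl((I+Z_2)+Y_1(I-Z_2)\bigr)\ket{G}=\tfrac{1}{2}\bigl((I+Z_2)-iR(I-Z_2)\bigr)\ket{G},
\end{equation*}
and use $\tfrac{1}{2}(I+Z_2)=\ket{0}\bra{0}_2$ together with $\tfrac{1}{2}(I-Z_2)=\ket{1}\bra{1}_2$ to display $\ket{\psi_{YZ}}$ as the controlled operator $\ket{0}\bra{0}_2\otimes I+\ket{1}\bra{1}_2\otimes(-iR)$ applied to $\ket{G}$.

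Second, I would verify that $S_2Z_2\prod_{x\in M_1}CZ_{2,x}$ equals the same controlled operator. Acting on a computational basis state $\ket{z}$, this product contributes the scalar $i^{z_2}(-1)^{z_2}(-1)^{z_2\sigma(z)}$ where $\sigma(z)=\sum_{x\in M_1}z_x$; this is $1$ when $z_2=0$ and $-i(-1)^{\sigma(z)}$ when $z_2=1$, which matches the action of $\ket{0}\bra{0}_2\otimes I+\ket{1}\bra{1}_2\otimes(-iR)$ since $R\ket{z}=(-1)^{\sigma(z)}\ket{z}$. Applying both operators to $\ket{G}$ then yields the same state, which is the claimed identity.

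The only real subtlety will be the case $2\in M_1$ (i.e., when $1\sim 2$ in $G$), since then $\prod_{x\in M_1}CZ_{2,x}$ contains the degenerate factor $CZ_{2,2}=Z_2$ and $R$ contains $Z_2$. The computational-basis comparison above handles both $2\in M_1$ and $2\notin M_1$ uniformly, because $\sigma(z)$ transparently includes or excludes $z_2$ as appropriate. Alternatively, one could invoke Theorem \ref{khesin} twice with $A=B=M_1$ and $k\in\{1,3\}$ applied respectively to $\tfrac{1}{2}(I\mp iR)\ket{G}$, then simplify the resulting products of $CZ$ and $CS$ operators to get the stated form; this is more in the stylistic spirit of Lemma \ref{xz}'s proof but requires more bookkeeping than the direct basis comparison.
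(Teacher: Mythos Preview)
Your proof is correct and takes a genuinely different, more elementary route than the paper. The paper applies Theorem \ref{graphmerging} twice: first to each of $(I\mp iZ_1X_1)\ket{G}$ separately, producing expressions with $S$ operators on $M_1$ and an auxiliary graph $G'$ (namely $G$ with all edges inside $M_1$ toggled), then recombining and applying Theorem \ref{graphmerging} once more to the merged expression, and finally doing casework on whether $2\in N_1$ to collapse the products of $S$ and $CS$ operators into the stated form. You instead use the single stabilizer identity $Y_1\ket{G}=-i\prod_{x\in M_1}Z_x\ket{G}$ to turn $\ket{\psi_{YZ}}$ into a diagonal operator acting on $\ket{G}$, and then check on the computational basis that this diagonal operator equals $S_2Z_2\prod_{x\in M_1}CZ_{2,x}$. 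Your argument is shorter, avoids introducing $G'$, and sidesteps the casework on $2\in N_1$ (your basis computation absorbs both cases since $z_2^2=z_2$ makes $(-1)^{z_2\sigma(z)}$ behave correctly whether or not $\sigma$ includes $z_2$). The paper's approach, while heavier here, is the template that scales to the harder entries $\ket{\psi_{XX}},\ket{\psi_{YX}},\ket{\psi_{YY}}$, where no single stabilizer relation reduces the problem to a diagonal operator and the merging theorems become indispensable. One small notational remark: writing the controlled operator as $\ket{0}\bra{0}_2\otimes I+\ket{1}\bra{1}_2\otimes(-iR)$ is mildly abusive when $2\in M_1$ since $R$ then acts on qubit $2$; it would be cleaner to write $\tfrac{1}{2}(I+Z_2)-\tfrac{i}{2}R(I-Z_2)$ throughout, but your subsequent basis comparison is correct regardless.
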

\begin{proof}
Let $G'$ refer to the graph resulting from toggling all the edges between vertices in the set $M_1$ in $G$. Applying Theorem \ref{graphmerging}, we have
\begin{multline*}
\frac{1}{2}\left(\underbrace{(I-iZ_1X_1)\ket{G}}_{A=M_1}+Z_2\underbrace{(I+iZ_1X_1)\ket{G}}_{A=M_1}\right)\\=\frac{1-i}{2}S_1\prod\limits_{x\in  N_1}S_x\ket{G'}+Z_2\frac{1+i}{2}S_1^3\prod\limits_{x\in N_1}S_x^3\ket{G'}\\
=\frac{1-i}{2}S_1\prod\limits_{x\in N_1}S_x\underbrace{\left(I+iZ_2\prod\limits_{x\in N_1}Z_x\right)\ket{G'}}_{A=M_1 \triangle\{2\}}\\
=S_1\prod\limits_{x\in N_1}S_x\prod\limits_{x\in M_1\triangle \{2\}}S_x^3\prod\limits_{x\in M_1\setminus \{2\}}CZ_{2,x}\ket{G}\\
=S_2Z_2\prod\limits_{x\in M_1}CZ_{2,x}\ket{G}
\end{multline*}where the final step can be seen from case work on whether $2\in N_1$.
\end{proof}
\begin{lemma}[Elliot et al.]
\label{xx}
If $(1,2)\in E(G)$,
\begin{equation}
    \ket{\psi_{XX}}=H_1H_2CZ_{1,2}\prod\limits_{x\in M_1,y\in M_2}CZ_{x,y}\ket{G}.
\end{equation}
Otherwise,
\begin{equation}
    \ket{\psi_{XX}}=\prod\limits_{x\in N_1,y\in N_2}CZ_{x,y}\ket{G}.
\end{equation}
\end{lemma}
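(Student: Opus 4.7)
My plan mirrors the proofs of Lemmas \ref{xz} and \ref{yz} in Appendix \ref{appendixc}. Starting from
\begin{equation*}
\ket{\psi_{XX}} = \frac{1}{2}\bigl((I+X_1) + (I-X_1)X_2\bigr)\ket{G},
\end{equation*}
I first use the stabilizer identities $X_i\ket{G} = \prod_{j\in N_i} Z_j\ket{G}$ to rewrite every $X$ as a product of $Z$'s acting on $\ket{G}$, obtaining
\begin{equation*}
\ket{\psi_{XX}} = \frac{1}{2}\bigl(I + \textstyle\prod_{j\in N_1} Z_j + \prod_{j\in N_2} Z_j - \prod_{j\in N_1\triangle N_2} Z_j\bigr)\ket{G},
\end{equation*}
where I have used $\prod_{j\in N_1} Z_j \cdot \prod_{j\in N_2} Z_j = \prod_{j\in N_1\triangle N_2} Z_j$. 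All four terms are now diagonal $Z$-products, which makes them amenable to Theorem \ref{khesin}.

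To apply Theorem \ref{khesin}, whose hypothesis demands $1 \in B$, I pass to the auxiliary graph $G_1$ obtained from $G$ by deleting every edge incident to vertex $1$; then $\ket{G} = \prod_{j \in N_1} CZ_{1,j}\ket{G_1}$ and vertex $1$ is isolated in $G_1$, so the theorem applies to $\ket{G_1}$ with $A = \{1\}$. Using $\prod_{j\in N_i} Z_j = Z_i \prod_{j\in M_i} Z_j$ to include $Z_1$ in the relevant products, each $(I\pm\prod Z_j)\ket{G_1}$ factor can be collapsed via Theorem \ref{khesin} (with $k\in\{0,2\}$) into an expression of the form $H_1 Z_1^{\epsilon}(\prod CZ)\ket{G_1}$. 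Two applications of the theorem---one ``at vertex $1$'', followed by an analogous maneuver ``at vertex $2$'' after the first application has isolated the relevant factor---produce an expression of the form $H_1 H_2 \cdot(\text{local Cliffords})\cdot\bigl(\prod CZ\bigr)\ket{G_1}$. Reinserting $\prod_{j\in N_1} CZ_{1,j}$ then restores $\ket{G}$ and merges the $CZ$'s into the final product.

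The case split emerges from how $N_1$ and $N_2$ overlap with $\{1,2\}$. When $(1,2)\in E(G)$, vertex $2\in N_1$ and $1\in N_2$, so the re-inserted $CZ_{1,j}$'s do not fully cancel the $CZ_{1,2}$'s produced by Theorem \ref{khesin}, leaving exactly one residual $CZ_{1,2}$ outside the bulk product $\prod_{x\in M_1, y\in M_2} CZ_{x,y}$ and keeping the $H_1 H_2$ prefactor. When $(1,2)\notin E(G)$, there is no such residual; moreover, the $H_1 H_2$ prefactors cancel against each other via stabilizer identities at vertices $1$ and $2$ of the final graph state, yielding the simpler $\prod_{x\in N_1,y\in N_2} CZ_{x,y}\ket{G}$.

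The main obstacle will be careful bookkeeping of the Pauli and phase corrections arising from each application of Theorem \ref{khesin}, together with the passage between $\ket{G}$ and $\ket{G_1}$. The two cases differ only in whether certain $CZ_{1,2}$ or $H$ factors survive the final cleanup, so once the framework is in place the case analysis is routine.
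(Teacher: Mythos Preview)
Your plan has a real obstruction at the step where you invoke Theorem~\ref{khesin}. That theorem requires the $Z$-product to act nontrivially on the distinguished vertex (the hypothesis $1\in B$). After you replace $X_1$ and $X_2$ by their stabilizer equivalents, the three nontrivial $Z$-products you obtain are supported on $N_1$, $N_2$, and $N_1\triangle N_2$. When $(1,2)\notin E(G)$ none of these sets contains vertex $1$ (or vertex $2$), so Theorem~\ref{khesin} cannot be applied at either distinguished vertex, whether you work in $G$ or in your $G_1$. Your proposed fix, rewriting $\prod_{j\in N_1}Z_j = Z_1\prod_{j\in M_1}Z_j$, is a tautology that does not change the support of the operator: the hypothesis $1\in B$ means the product must genuinely act on qubit $1$, and $\prod_{j\in N_1}Z_j$ does not. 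There is also a sign slip in the connected case: when $(1,2)\in E(G)$ the $X_1$ anticommutes with the $Z_1$ hidden inside $\prod_{j\in N_2}Z_j$, so the $X_1X_2$ term contributes $+\prod_{j\in N_1\triangle N_2}Z_j$ rather than $-\prod_{j\in N_1\triangle N_2}Z_j$.

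The paper bypasses all of this with a single observation you are missing: the operator $\tfrac{1}{2}\bigl((I+X_1)+(I-X_1)X_2\bigr)$ equals $H_1H_2\,CZ_{1,2}\,H_1H_2$, so $\ket{\psi_{XX}}=H_1H_2\,CZ_{1,2}\,H_1H_2\ket{G}$. When $(1,2)\in E(G)$, one application of Theorem~\ref{hsliding} to $H_1H_2\ket{G}$ finishes immediately; the disconnected case is simply cited from~\cite{stabgraph1}. If you want to salvage your diagonalisation idea for $(1,2)\notin E(G)$, note that once everything is a $Z$-product both sides are diagonal operators on $\ket{G}$, so it suffices to verify the scalar identity $\tfrac{1}{2}\bigl(1+(-1)^a+(-1)^b-(-1)^{a+b}\bigr)=(-1)^{ab}$ on computational basis states---no appeal to Theorem~\ref{khesin} is needed or possible.
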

\begin{proof}
If $1$ and $2$ are connected in $G$, we apply Theorem \ref{hsliding}
\begin{align*}
    \ket{\psi_{XX}}&=H_1H_2CZ_{1,2}H_1H_2\ket{G}\nonumber\\
    &=H_1H_2CZ_{1,2}\prod\limits_{x\in M_1,y\in M_2}CZ_{x,y}\ket{G}
\end{align*}
If $1$ and $2$ are not connected we follow the proof given in \cite{stabgraph1}.
\end{proof}
\begin{lemma}
\label{xy}
If $(1,2)\in E(G)$,
\begin{equation}
    \ket{\psi_{YX}}=\frac{1-i}{\sqrt{2}}\left(\prod\limits_{x\in M_1}S_x\right) H_1 \prod\limits_{x\in M_1\triangle M_2}CZ_{1,x}\ket{L_1(G)}.
\end{equation}Otherwise,
\begin{equation}
    \ket{\psi_{YX}}=\prod\limits_{x\in M_1\triangle N_2}Z_x\prod\limits_{x,y\in M_1\triangle N_2}CS_{x,y}\prod\limits_{x,y\in M_1}CS_{x,y}\ket{G}.
\end{equation}
\end{lemma}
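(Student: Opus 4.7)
My plan is to mirror the proofs of Lemmas \ref{xz} and \ref{yz}: decompose $\ket{\psi_{YX}} = \tfrac{1}{2}(I+Y_1)\ket{G} + \tfrac{1}{2}(I-Y_1)X_2\ket{G}$ and reduce each half using the graph-state identity $X_i\ket{G} = \prod_{j \in N_i}Z_j\ket{G}$. Since $Y = -iZX$, one has $Y_1\ket{G} = -i\prod_{j\in M_1}Z_j\ket{G}$, so $(I + Y_1)\ket{G} = (I + i^{3}\prod_{j\in M_1}Z_j)\ket{G}$ and $(I - Y_1)\ket{G} = (I + i\prod_{j\in M_1}Z_j)\ket{G}$. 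Applying Theorem \ref{graphmerging} with $m = 1$ and $m = 0$ respectively collapses these to $(1-i)\prod_{p,q\in M_1}CS_{p,q}\ket{G}$ and $(1+i)\prod_{p\in M_1}Z_p\prod_{p,q\in M_1}CS_{p,q}\ket{G}$.

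The dichotomy in the lemma comes from how $Y_1$ commutes past $\prod_{j \in N_2}Z_j$: since $Y_1 Z_1 = -Z_1 Y_1$ while $Y_1 Z_j = Z_j Y_1$ for $j \neq 1$, a sign appears exactly when $1 \in N_2$, i.e.\ when $(1,2) \in E(G)$. In the case $(1,2) \notin E(G)$ no sign appears, so the two halves combine to $\prod_{p,q\in M_1}CS_{p,q}\bigl[\tfrac{1-i}{2} I + \tfrac{1+i}{2}\prod_{j\in M_1\triangle N_2}Z_j\bigr]\ket{G}$ after using $Z^2 = I$ to merge the $Z$-products. Factoring out $\tfrac{1-i}{2}$ and using $(1+i)/(1-i) = i$ leaves a $(I + i\prod_{j\in M_1\triangle N_2}Z_j)$ factor, to which I apply Theorem \ref{graphmerging} one more time with $A = M_1 \triangle N_2$ and $m = 0$; after reordering the commuting diagonal factors the stated formula falls out.

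For the case $(1,2) \in E(G)$ the sign flip makes the two coefficients equal, so the sum collapses to $\tfrac{1-i}{2}\prod_{p,q\in M_1}CS_{p,q}(I + \prod_{j\in N_2}Z_j)\ket{G}$. Since $1 \in N_2$, Theorem \ref{khesin} with $k = 0$, $A = M_1$, $B = N_2$ rewrites the last factor as $\sqrt{2}\,H_1 Z_1 \prod_{x\in M_1,\,y\in N_2}CZ_{x,y}\ket{G}$, producing $\tfrac{1-i}{\sqrt{2}}\prod_{p,q\in M_1}CS_{p,q}\cdot H_1 Z_1\prod_{x\in M_1,\,y\in N_2}CZ_{x,y}\ket{G}$. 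The main obstacle is then massaging this intermediate expression into the stated target, where the $CS$-cloud has become a local $\prod_{x\in M_1}S_x$ and the bipartite $CZ$-layer has become a star $\prod_{x\in M_1\triangle M_2}CZ_{1,x}$ acting on $\ket{L_1(G)}$. I would carry this out by substituting $\ket{G} = H_1 S_1^{\dagger} H_1 \prod_{i\in N_1}S_i\ket{L_1(G)}$ from Theorem \ref{lc}, using Equation \ref{1} at qubit $1$ to absorb the $\prod_{p,q\in N_1}CS_{p,q}$ subcloud into the Hadamard-$S$ layer, and exploiting the cancellation $CZ_{1,1}\cdot Z_1 = I$ together with careful bipartite bookkeeping to show that the surviving non-trivial $CZ$'s are exactly $CZ_{1,x}$ with $x \in M_1 \triangle M_2$ relative to the locally-complemented graph $L_1(G)$.
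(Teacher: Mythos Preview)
Your approach is essentially the paper's: both proofs split $\ket{\psi_{YX}}$ into $(I\pm Y_1)\ket G$, collapse each half via Theorem~\ref{graphmerging} (using $Y_1\ket G=-i\prod_{j\in M_1}Z_j\ket G$), pick up or not the sign from $1\in N_2$, and then finish with a second application of Theorem~\ref{graphmerging} (disconnected case) or Theorem~\ref{khesin} (connected case). The only organisational difference is that the paper immediately rewrites $\prod_{p,q\in M_1}CS_{p,q}\ket G$ as $\prod_{x\in M_1}S_x\ket{G'}$, where $G'$ is $G$ with all edges inside $M_1$ toggled (equivalently $L_1(G)$ with vertex~$1$ isolated); this makes the second invocation of Theorem~\ref{khesin} trivial because now $A=N_{G'}(1)\cup\{1\}=\{1\}$, and converting $\ket{G'}$ to $\ket{L_1(G)}$ is then just restoring the edges $\prod_{x\in N_1}CZ_{1,x}$. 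Your plan to instead keep the $CS$-cloud and substitute Theorem~\ref{lc} at the end works too, but is more bookkeeping for the same result; if you want to streamline the connected case, adopt the $G'$ device.
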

\begin{proof}
Let $G'$ be $G$ with all edges between vertices in $M_1$ toggled. If vertices $1$ and $2$ are connected in $G$,
\begin{multline*}
\frac{1}{2}\left( \underbrace{(I-iZ_1X_1)\ket{G}}_{A=M_1}+X_2\underbrace{(I+iZ_1X_1)\ket{G}}_{A=M_1} \right)\\=\frac{1-i}{2}\prod\limits_{x\in M_1}S_x\ket{G'}+Y_2\frac{1+i}{2}\prod\limits_{x\in M_1}S_x^3\ket{G'}\\
=\frac{1-i}{2}\prod\limits_{x\in M_1}S_x\ket{G_1}+(1+i)\prod\limits_{x\in M_1}S_x^3 Y_2\ket{G'}\\
=\frac{1-i}{2}\prod\limits_{x\in M_1}S_x\underbrace{\left(I+Z_2\prod\limits_{x\in M_1}Z_x\prod\limits_{x\in M_2\triangle M_1}Z_x\right)\ket{G'}}_{A=\{1\},B=N_2}
\end{multline*}
\begin{multline*}
=\frac{1-i}{\sqrt{2}}\left(\prod\limits_{x\in M_1}S_x\right)H_1\prod\limits_{x\in N_2}CZ_{1,x}\ket{G'}\\
=\frac{1-i}{\sqrt{2}}\left(\prod\limits_{x\in M_1}S_x\right)H_1\prod\limits_{x\in N_1\cup N_2}CZ_{1,x}\ket{L_1(G)}
\end{multline*}If vertices $1$ and $2$ are not connected in $G$, \begin{multline*}
    \frac{1}{2}\left( \underbrace{(I-iZ_1X_1)\ket{G}}_{A=M_1}+X_2\underbrace{(I+iZ_1X_1)\ket{G}}_{A=M_1} \right)\\=\frac{1-i}{2}\prod\limits_{x\in M_1}S_x\ket{G'}+X_2\frac{1+i}{2}\prod\limits_{x\in M_1}S_x^3\ket{G'}\\
    =\frac{1-i}{2}\prod\limits_{x\in M_1}S_{x}\underbrace{\left(I+iZ_2\prod\limits_{x\in M_1}Z_x\prod\limits_{x\in M_2}Z_x\right)\ket{G'}}_{A=M_1\triangle N_2}\\
    =\prod\limits_{x\in N_1}S_x \prod\limits_{x\in M_1\triangle N_2}Z_x \prod\limits_{x,y\in M_1\triangle N_2}CS_{x,y}\ket{G'}\\
    =\prod\limits_{x\in M_1\triangle N_2}Z_x\prod\limits_{x,y\in M_1\triangle N_2}CS_{x,y}\prod\limits_{x,y\in M_1}CS_{x,y}\ket{G}
\end{multline*}
\end{proof}
\begin{lemma}
\label{yy}
If $(1,2)\in E(G)$,
\begin{multline}
    \ket{\psi_{YY}}=-i\prod\limits_{x,y\in M_1}CS_{x,y}\prod\limits_{x,y\in M_2}CS_{x,y}\ket{G}.
\end{multline}Otherwise,
\begin{multline}
    \ket{\psi_{YY}}=\frac{1-i}{\sqrt{2}}\prod\limits_{x\in M_1}S_x H_1 \prod\limits_{x\in M_2}CZ_{1,x}\ket{L_1(G)}.
\end{multline}
\end{lemma}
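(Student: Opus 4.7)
My plan is to expand $\ket{\psi_{YY}} = \frac{1}{2}(I + Y_1 + Y_2 - Y_1Y_2)\ket{G}$ and rewrite each summand as a Pauli-$Z$ string acting on $\ket{G}$, using $Y_i = -iZ_iX_i$ together with $X_i\ket{G} = \prod_{p\in N_i}Z_p\ket{G}$. A short computation gives $Y_j\ket{G} = -i\prod_{p\in M_j}Z_p\ket{G}$. For $Y_1Y_2\ket{G} = -Z_1Z_2X_1X_2\ket{G}$, the value depends on whether $X_1$ and $\prod_{p\in N_2}Z_p$ anti-commute, i.e.~whether $1\in N_2$. Tracking the signs yields $Y_1Y_2\ket{G} = \prod_{p\in M_1\triangle M_2}Z_p\ket{G}$ in the connected case and $Y_1Y_2\ket{G} = -\prod_{p\in M_1\triangle M_2}Z_p\ket{G}$ in the disconnected case. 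This sign discrepancy is exactly what forces the two stated formulas to take structurally different forms.

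In Case 1 ($(1,2)\in E(G)$), the four-term sum factors cleanly as $\ket{\psi_{YY}} = \frac{1}{2}(I - i\prod_{p\in M_1}Z_p)(I - i\prod_{p\in M_2}Z_p)\ket{G}$, because the cross term $i^2\prod_{p\in M_1}Z_p\prod_{p\in M_2}Z_p = -\prod_{p\in M_1\triangle M_2}Z_p$ matches $-Y_1Y_2\ket{G}$. I then apply Theorem \ref{graphmerging} with $m=1$ twice, first with $A=M_1$ to rewrite $(I-i\prod_{p\in M_1}Z_p)\ket{G} = (1-i)\prod_{p,q\in M_1}CS_{p,q}\ket{G}$, then with $A=M_2$, legitimately commuting the newly-introduced $\prod_{p,q\in M_1}CS_{p,q}$ past $\prod_{p\in M_2}Z_p$ since both are diagonal. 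The scalar prefactor collects to $\frac{1}{2}(1-i)^2 = -i$, producing the target expression.

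In Case 2 ($(1,2)\notin E(G)$), the cross term flips sign and no $\pm i$ factorization works, so I pivot to local complementation. Theorem \ref{lc} at qubit $1$ writes $\ket{G} = V_1 W\ket{L_1(G)}$ with $V_1 := H_1 S_1^\dagger H_1$ and $W := \prod_{i\in N_1}S_i$. A direct Clifford check shows $V_1 Z_1 V_1^\dagger = Y_1$, so conjugation by $V_1^\dagger$ transforms the $\ket{\psi_{YY}}$ operator into $\frac{1}{2}((I+Z_1)+(I-Z_1)Y_2)$, which is exactly the operator defining $\ket{\psi_{ZY}}$. Because $2\notin N_1$ here, $W$ commutes with both $Z_1$ and $Y_2$ and can be pulled outside; and since $N_{L_1(G)}(2) = N_2$ (LC at $1$ does not touch $2$'s neighborhood when $2\notin N_1$), Lemma \ref{yz} with qubits $1,2$ swapped evaluates the inner factor to $S_1 Z_1\prod_{x\in M_2}CZ_{1,x}\ket{L_1(G)}$. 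What remains is a single-qubit simplification on qubit $1$: using $HS^\dagger H = \frac{1-i}{2}(I+iX)$, a direct matrix computation gives $V_1 S_1 Z_1 = \frac{1-i}{\sqrt{2}}S_1 H_1$, and merging $S_1$ into $W$ yields $\prod_{i\in M_1}S_i$, matching the claimed form. The genuine obstacle is recognizing that LC at qubit $1$ collapses $\ket{\psi_{YY}}$ to the already-established $\ket{\psi_{ZY}}$ case, and verifying the single-qubit identity $V_1 S_1 Z_1 = \frac{1-i}{\sqrt{2}}S_1 H_1$; everything else is careful bookkeeping of symmetric differences and the diagonal character of $CS$ operators.
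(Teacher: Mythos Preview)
Your proof is correct. In Case~1 your argument and the paper's are essentially the same: both apply Theorem~\ref{graphmerging} twice with $A=M_1$ and $A=M_2$. Your upfront factorization $\frac{1}{2}(I-i\prod_{p\in M_1}Z_p)(I-i\prod_{p\in M_2}Z_p)\ket{G}$ is a slightly cleaner way to organize the same computation; the paper instead applies Theorem~\ref{graphmerging} once, conjugates $Y_2$ through the resulting $S$-layer (using $2\in M_1$), and then applies Theorem~\ref{graphmerging} again on $G'$.

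In Case~2 the routes genuinely diverge. The paper works directly: it applies Theorem~\ref{graphmerging} to both $(I\pm iZ_1X_1)\ket{G}$, recombines the two terms (now $2\notin M_1$ so $Y_2$ survives conjugation), and invokes Theorem~\ref{khesin} with $A=\{1\}$, $B=M_1\triangle M_2$ to land on the $H_1$ form, then rewrites $G'$ as $L_1(G)$. Your approach instead uses Theorem~\ref{lc} to conjugate $Y_1\mapsto Z_1$ via $V_1=H_1S_1^\dagger H_1$, reducing the whole problem to the already-established $\ket{\psi_{ZY}}$ formula (Lemma~\ref{yz} with the roles of $1$ and $2$ swapped) applied to $L_1(G)$. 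Your route is more modular and makes explicit why local complementation at qubit~$1$ is the natural move when $Y_1$ appears; the paper's route is self-contained and avoids appealing to a previous lemma. Both rely on the same observation that $2\notin N_1$ keeps $N_2$ unchanged under $L_1$ and lets $W=\prod_{i\in N_1}S_i$ commute past the relevant operators.
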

\begin{proof}
Let $G'$ be $G$ with all edges between vertices in $M_1$ toggled. If vertices $1$ and $2$ are connected in $G$,
\begin{multline*}
\frac{1}{2}\left( \underbrace{(I-iZ_1X_1)\ket{G}}_{A=M_1}+Y_2\underbrace{(I+iZ_1X_1)\ket{G}}_{A=M_1} \right)\\=\frac{1-i}{2}\prod\limits_{x\in M_1}S_x\ket{G'}+Y_2\frac{1+i}{2}\prod\limits_{x\in M_1}S_x^3\ket{G'}\\
=\frac{1-i}{2}\prod\limits_{x\in M_1}S_x\ket{G'}-(1+i)\prod\limits_{x\in M_1}S_x^3 X_2\ket{G'}\\
=\frac{1-i}{2}\prod\limits_{x\in M_1}S_x\left(I-i\prod\limits_{x\in M_1}Z_x\prod\limits_{x\in M_2\triangle M_1}Z_x\right)\ket{G'}\\
=\frac{1-i}{2}\prod\limits_{x\in M_1}S_x\underbrace{\left(I-i\prod\limits_{x\in M_2}Z_x\right)\ket{G'}}_{A=M_2}\\
=-i\prod\limits_{x\in M_1}S_x\prod\limits_{x,y\in M_2}CS_{x,y}\ket{G'}\\
=-i\prod\limits_{x,y\in M_1}CS_{x,y}\prod\limits_{x,y\in M_2}CS_{x,y}\ket{G}
\end{multline*}If vertices $1$ and $2$ are not connected in $G$, \begin{multline*}
    \frac{1}{2}\left( \underbrace{(I-iZ_1X_1)\ket{G}}_{A=M_1}+Y_2\underbrace{(I+iZ_1X_1)\ket{G}}_{A=M_1} \right)\\=\frac{1-i}{2}\prod\limits_{x\in M_1}S_x\ket{G'}+Y_2\frac{1+i}{2}\prod\limits_{x\in M_1}S_x^3\ket{G'}\\
    =\frac{1-i}{2}\prod\limits_{x\in M_1}S_{x}\underbrace{\left(I+\prod\limits_{x\in M_1}Z_x\prod\limits_{x\in M_2}Z_x\right)\ket{G'}}_{A=\{1\},B=M_1\triangle M_2}\\
    =\frac{1-i}{\sqrt{2}}\prod\limits_{x\in M_1}S_x H_1Z_1\prod_{x\in M_1\triangle M_2}CZ_{1,x}\ket{G'}\\
    =\frac{1-i}{\sqrt{2}}\prod\limits_{x\in M_1}S_x H_1 \prod_{x\in M_2}CZ_{1,x}\ket{L_1(G)}
\end{multline*}
\end{proof}

\end{document}